\newtheorem{theorem}{Theorem}[section]
\newtheorem{prop}[theorem]{Proposition}
\newtheorem{lemma}[theorem]{Lemma}
\newtheorem{definition}[theorem]{Definition}
\newcommand{\F}{\mathbb{F}}  
\newcommand{\R}{\mathbb{R}}
\DeclareMathOperator*{\E}{\mathbb{E}}  
\newcommand{\1}{ \mathbbm{1}}
\newcommand{\one}[1]{\mathbf{1}_{\left\{#1\right\}}}
\newcommand{\eps}{\varepsilon}
\newcommand{\ip}[2]{\left\langle#1 ,#2 \right\rangle}
\newcommand{\bits}{\{0,1\}}
\def\BibTeX{{\rm B\kern-.05em{\sc i\kern-.025em b}\kern-.08em
    T\kern-.1667em\lower.7ex\hbox{E}\kern-.125emX}}
\begin{document}

\title{More efficient sifting for grid norms, and applications to multiparty communication complexity}

\date{}

 \author{
Zander Kelley\thanks{
Institute for Advanced Study (IAS).
Supported by the Director's Discretionary Fund. 
Email: \texttt{awk@ias.edu}.
} \and
Xin Lyu\thanks{
UC Berkeley. Supported by a Google Fellowship. Email: \texttt{xinlyu@berkeley.edu}
}
}

\maketitle

\begin{abstract}
Building on the techniques behind the recent progress on the 3-term arithmetic progression problem \cite{KelleyM2023strong}, Kelley, Lovett, and Meka \cite{KelleyLM2024-nof} constructed the first explicit 3-player function $f:[N]^3 \rightarrow \{0,1\}$ that demonstrates a strong separation between randomized and (non-)deterministic NOF communication complexity. Specifically, their hard function can be solved by a randomized protocol sending $O(1)$ bits, but requires $\Omega(\log^{1/3}(N))$ bits of communication with a deterministic (or non-deterministic) protocol.

We show a stronger $\Omega(\log^{1/2}(N))$ lower bound for their construction. To achieve this, the key technical advancement is an improvement to the sifting argument for grid norms of (somewhat dense) bipartite graphs. In addition to quantitative improvement, we qualitatively improve over \cite{KelleyLM2024-nof} by relaxing the hardness condition: while \cite{KelleyLM2024-nof} proved their lower bound for any function $f$ that satisfies a strong two-sided pseudorandom condition, we show that a weak one-sided condition suffices. This is achieved by a new structural result for cylinder intersections (or, in graph-theoretic language, the set of triangles induced from a tripartite graph), showing that any small cylinder intersection can be efficiently covered by a sum of simple ``slice'' functions. 
\end{abstract}

\section{Introduction}

In a recent work by Kelley and Meka \cite{KelleyM2023strong}, substantial quantitative progress was made on a well-known problem in additive combinatorics concerning the size of the largest set of integers $A \subseteq [N]$ which do not contain any three points $x,y,z$ which are evenly spaced (i.e., which form a $3$-AP). It was shown that such sets cannot have more than ``quasipolynomial" density: 
$|A| \leq 2^{-\Omega(\log(N)^{1/12})} N$.\footnote{See \cite{bloom2023kelley}, \cite{bloom2023improvement} for some subsequent improvements. }
This comes reasonably close to the size of known constructions of large $3$-AP free sets, which have size $|A| \geq 2^{-\Omega(\log(N)^{1/2})} N$.\footnote{
\cite{behrend46}. See also \cite{elkin2010improved}, \cite{gw10}, \cite{obryant11}, \cite{hunter2024new}, \cite{elsholtz2024improving} for some refinements.}

Since then, there have been several new applications in theoretical computer science and extremal combinatorics, which rely not on \textit{result} of this work but on its \textit{techniques}. 
Specifically, it has since been realized (first in a work of Kelley, Meka and Lovett \cite{KelleyLM2024-nof}) that the two main new ingredients from \cite{KelleyM2023strong} (``sifting" \cite[Section 4]{KelleyM2023strong}) and ``spectral positivity" \cite[Section 5]{KelleyM2023strong}) can be generalized to the graph theoretic setting, and are in fact best thought of as tools which are useful for the study of arbitrary bipartite graphs that have (at least) quasipolynomial edge-density. The main result of \cite{KelleyLM2024-nof} was an application of these new graph-theoretic tools to a problem in multiparty Number-On-Forehead (NOF) communication complexity. Subsequent additional applications (and new refinements) to these techniques were given in \cite{abboud2024new} and \cite{filmus2024sparse}. 

The specific applications in these works are on distant enough topics\footnote{For example, \cite{KelleyLM2024-nof} is about lower bounds for communication complexity, while \cite{abboud2024new} is about upper bounds for graph algorithms.} that it becomes somewhat difficult to give a succinct description of what these new techniques are \textit{good for}, precisely. Nevertheless, we offer the following attempt at one. In any situation where one is faced with a graph-theoretic problem which can be solved with Szemerédi's regularity lemma, on graphs with $\Omega(1)$ edge-density, then it is potentially reasonable to hope that this new machinery can lead to a solution for graphs with quasipolynomial edge-density. It is of course likely that for certain applications this hope is overly-optimistic, but we do not have a clear picture of what such cases should look like. The limitations of these new graph theoretic techniques are not yet well-understood.

This line of work (which develops graph-theoretic generalizations of the ideas introduced in \cite{KelleyM2023strong}) is still relatively new, and it is unclear that any of the known results or underlying technical lemmas represent the ``best version of themselves", in a quantitative sense. In particular, each work mentioned so far proves some result about structures within a finite space of size $N$, which holds for all structures of quasipolynomial density: at least $2^{-\log(N)^c}$ for some constant $c > 0$. But, we are not aware of any application where it can be demonstrated that the result is \textit{tight}, and we obtain the \textit{right} quasipolynomial. 
In fact, it would be interesting to describe any application at all, along with an efficient-enough implementation of these new graph-theoretic techniques, where it can be shown that we obtain a tight answer, with the right quasipolynomial bound.

A core technical component needed in each of the works mentioned is an efficient ``inverse theorem" for the grid norm (see \eqref{equ: defn grid norm} for its definition) of a bipartite graph: one needs to show (often by an argument involving sifting/dependent random choice) that for any dense bipartite graph with unusually large grid norm, it is possible to zoom in on some (reasonably sized) induced subgraph which has noticeably increased edge-density.
Our main technical contribution is to give an improvement regarding this step.\footnote{More specifically, our approach gives a more efficient inverse theorem for the $U(2,k)$ grid norm, which is the case most important for applications thus far. Our approach can be made to work also for general $U(\ell,k)$ grid norms, but for large $\ell$ (say, $\ell \approx k$) we do not seem to get any improvement over existing arguments.}  The specific result is given by \cref{lemma: grid upper bound for spread matrix}.\footnote{See also \cref{equ:deviation fg to uniform} for a related statement which is often more directly useful in applications. } 
As an application, we give a quantitative improvement to the main result of \cite{KelleyLM2024-nof}, which was the construction of an explicit function $\1_{D} : [N]^3 \rightarrow \bits$ which is hard for deterministic 3-NOF protocols but easy for randomized 3-NOF protocols. In particular, 
our improved analysis shows that, in order to express the hard set $D$ as a union of some $R$ cylinder intersections, $R \geq 2^{\Omega(\log(N)^{1/2})}$ is required. 
This improves the prior result, $R \geq 2^{\Omega(\log(N)^{1/3})}$, from \cite{KelleyLM2024-nof}.

\paragraph{Cylinder Intersections.}
The basic combinatorial object of interest in the study of multiparty NOF communication complexity is the \emph{cylinder intersection}. 
In the $3$-party setting, a cylinder intersection $S$ is a subset of a product space $X \times Y \times Z$ 
whose indicator function has the form
$$ \1_{S}(x,y,z) = f(x,y)g(x,z)h(y,z). $$
Here, $f,g,$ and $h$ are indicator functions, each of which depends on only $2$ out of $3$ coordinates (so, $S$ is the intersection of three ``combinatorial cylinders"). 
For the purpose of obtaining complexity lower bounds for NOF communication protocols, 
it is important to understand the structural limitations of cylinder intersections (or at least, some particular limitation, which we can try to target). 

Our main result is the following structural statement about small cylinder intersections (over $X \times Y \times Z$), which states that they can be somewhat efficiently covered by a small collection of simpler functions. To state it, we first describe this family of simple functions. 

\begin{definition}[Slice Function]
    A slice function $s(x,y,z)$  (on $\Omega := X \times Y \times Z$) is an indicator function of the form
    $$ s(x,y,z) = g(x,y)h(z) \quad \text{or} \quad g(x)h(y,z) \quad \text{or} \quad g(x,z)h(y).$$ 
\end{definition}

That is, a slice function is just a combinatorial rectangle on $(X \times Y) \times Z$ or $X \times (Y \times Z)$ or $(X \times Z) \times Y$.

\begin{theorem}[Slice Function ``Removal Lemma"]\label{thm: removal lemma}
Let $f(x,y),g(x,z),h(y,z)$ be some indicator functions on $\Omega := X \times Y \times Z$, and consider the cylinder intersection
$$ F(x,y,z) := f(x,y)g(x,z)h(y,z). $$
Suppose that $F$ is small, i.e.\
$$ \E_{(x,y,z) \in \Omega}[F(x,y,z)] \leq 2^{-d}. $$
Then, $F$ has a reasonably small (pointwise) cover $F'$,
$$ F(x,y,z) \leq F'(x,y,z) := \sum_{i=1}^{R} s_i(x,y,z), $$
for some slice functions $s_i(x,y,z)$. Specifically,
\begin{itemize}
    \item $\E_{(x,y,z) \in \Omega}[F'(x,y,z)] \leq 2^{-\Omega(d^{1/2})}\log|\Omega|$, and
    \item $\E[s_i]\ge 2^{-O(d)}$ for every $i\in [R]$ (consequently, $R \le 2^{O(d)}\cdot \log(|\Omega|)$). \footnote{It is possible to obtain a version of this statement with no dependence on the size of the ambient space $\Omega$ -- that is, if one is willing to settle for a \emph{fractional} cover by slice functions. See \cref{thm: fractional cover}.}
\end{itemize}
\end{theorem}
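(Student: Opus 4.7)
The plan is to build the cover iteratively by density increments, with \cref{lemma: grid upper bound for spread matrix} as the key subroutine for finding each slice. Starting from $T_0:=\mathrm{supp}(F)$, at each step $i$ I find a slice function $s_i$ whose support captures a portion of $T_i:=T_{i-1}\setminus\mathrm{supp}(s_{i-1})$ at a density $2^{\Omega(d^{1/2})}$ above the ambient value $\delta=2^{-d}$, and then set $T_{i+1}:=T_i\setminus\mathrm{supp}(s_i)$. An accounting calculation -- each $s_i$ has $\E[s_i]\approx 2^{-O(d)}$ and captures $\approx 2^{-O(d)}\cdot\delta\cdot 2^{\Omega(d^{1/2})}$ of $F$-mass, while the total mass to cover is $\delta$ -- gives $\sum_i\E[s_i]\lesssim 2^{-\Omega(d^{1/2})}$, matching the theorem up to the $\log|\Omega|$ factor. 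The bound on $R$ then follows for free from the lower bound on each $\E[s_i]$ and the upper bound on $\E[F']$.

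\textbf{Finding each slice.} The subroutine splits on how ``spread'' the factors of the current cylinder intersection are. \emph{Trivial case:} if any of $\E[f],\E[g],\E[h]$ is at most $2^{-\Omega(d^{1/2})}$, the factor itself (e.g.\ $f(x,y)\cdot 1(z)$, a slice of type $g(x,y)h(z)$ with $h\equiv 1$) already covers all of $F$ with measure $\le 2^{-\Omega(d^{1/2})}$, and the theorem follows in one shot. \emph{Spread case:} otherwise all three factors have density $\ge 2^{-\Omega(d^{1/2})}$, and I consider (say) the $YZ$-marginal
\[
M(y,z)\;:=\;h(y,z)\cdot\E_x[\,f(x,y)\,g(x,z)\,],
\]
which has $\E[M]=\delta$ and is supported on the relatively dense graph $h$. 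The matrix $M$ now satisfies the hypotheses of \cref{lemma: grid upper bound for spread matrix}, and applying that lemma produces a rectangle $V\times W\subseteq Y\times Z$ of measure $\ge 2^{-O(d)}$ on which the density of $M$ exceeds its ambient density by a factor $2^{\Omega(d^{1/2})}$. The indicator $\mathbf{1}_V(y)\mathbf{1}_W(z)$ is precisely a slice function of type $g(x)h(y,z)$ with $g\equiv 1$, and becomes $s_i$; symmetric choices among the three coordinate views (or a pigeonhole over the views using that $\E[F]$ decomposes as $\E[M_{XY}]=\E[M_{XZ}]=\E[M_{YZ}]=\delta$) guarantee that such a slice exists at every step.

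\textbf{Main obstacle.} The delicate step is the quantitative sifting in the spread case: producing a density boost of $2^{\Omega(d^{1/2})}$ on a rectangle of measure $\ge 2^{-O(d)}$ at every iteration is exactly the strength of \cref{lemma: grid upper bound for spread matrix} over the earlier sifting of \cite{KelleyLM2024-nof}, and is what upgrades the final exponent from $d^{1/3}$ to $d^{1/2}$. The $\log|\Omega|$ factor in the $\E[F']$ bound should absorb a fall-back at the tail of the iteration: once the residual $F$-mass falls below roughly $1/|\Omega|$ the density-increment step no longer returns a useful slice, and I expect to finish via a dyadic-scale argument that pays one extra $\log|\Omega|$ but introduces no slices smaller than $2^{-O(d)}$. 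A secondary subtlety is that the density increment must be ``additive'' across iterations (so that the slice measures sum cleanly to $2^{-\Omega(d^{1/2})}$ rather than saturating at the ambient density after a constant number of rounds); I would handle this either by refreshing the ambient density at each round (treating the residual as a cylinder intersection of its own, restricted to the uncovered region) or by an energy-increment argument that tracks a single real-valued potential.
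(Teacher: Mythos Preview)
Your iterative greedy skeleton is reasonable, and in fact an argument of that shape can be made to work: if at every stage you could produce a slice $s_i$ of density $\ge 2^{-O(d)}$ on which the residual of $F$ has density boosted by a factor $\ge 2^{c\sqrt d}$, then the telescoping bound $\sum_i \E[s_i] \le 2^{-c\sqrt d}\sum_i \frac{|s_i\cap T_i|}{|T_i|} \le 2^{-c\sqrt d}\,O(\log|\Omega|)$ gives exactly the stated conclusion. But the step you label ``finding each slice'' is where essentially all of the work lives, and the mechanism you propose for it does not do what you claim.

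Concretely, \cref{lemma: grid upper bound for spread matrix} says that a \emph{spread}, dense matrix has small $U(2,k)$ norm; it does not ``produce a rectangle $V\times W$ of measure $\ge 2^{-O(d)}$ on which the density of $M$ exceeds its ambient density by a factor $2^{\Omega(d^{1/2})}$.'' Even read in the contrapositive, the lemma would only tell you that a matrix with abnormally large grid norm is not $(k,\eps)$-spread, i.e.\ admits \emph{some} rectangle with a $(1+\eps)$ boost (subject to a size/boost tradeoff governed by the flat $(k,k)$-norm). You have not argued that your marginal $M(y,z)=h(y,z)\,\E_x[f(x,y)g(x,z)]$ has large grid norm, and indeed it need not: when $f,g$ are quasirandom bipartite graphs of density $p$ and $h\equiv 1$, $M$ is essentially the constant $p^2$ and has no rectangle whatsoever with a nontrivial boost. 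In such examples the good slice sits in a \emph{different} coordinate view (e.g.\ the support of $f$ in $X\times Y$, which is an arbitrary subset and not a rectangle), and no amount of sifting on the $Y\times Z$ marginal will find it; the ``symmetric choices among the three views'' remark does not fix this, because in each view separately your invocation of \cref{lemma: grid upper bound for spread matrix} is still a non sequitur.

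What the paper actually does is dualize: the covering LP has as its dual a packing problem whose feasible solutions are exactly distributions $\mathcal D$ supported on $F$ that are evasive for large slice functions. The entire content is then \cref{prop:evasive-to-largeness}: any such evasive $\mathcal D$ forces $\E[F]>2^{-d}$. The proof of that proposition is where \cref{lemma: grid upper bound for spread matrix} really enters, but in a quite different way than you suggest --- one first passes to an optimal subcube $C^\star$, uses evasiveness to show that \emph{all three} face marginals of $\mathcal D|_{C^\star}$ are spread (and bounded), and then applies the product-of-spread-matrices machinery (\cref{thm: product of spread matrices}, which packages \cref{lemma: grid upper bound for spread matrix} together with the spectral-positivity step) to lower bound $\langle \tilde f\circ\tilde h,\,\tilde g\rangle$ and hence $\E[F]$. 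Equivalently, in the language of your iteration: the existence of a slice with boost $2^{\Omega(\sqrt d)}$ at every stage is precisely the contrapositive of \cref{prop:evasive-to-largeness} applied to the uniform measure on the residual $T_i$ (which is still supported on the original cylinder intersection $F$). That proposition is the heart of the theorem, not a consequence of \cref{lemma: grid upper bound for spread matrix} applied to a single two-dimensional marginal.
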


\paragraph{Explicit Separations for Deterministic and Randomized NOF Complexity.}
As a corollary, we obtain the following improvement to the main result from \cite{KelleyLM2024-nof}, which was the construction of an explicit function which is hard for deterministic 3-NOF protocols but easy for randomized 3-NOF protocols. 

\begin{theorem}\label{thm: main nof lb}
Let $q$ be a prime power and $k$ a large enough constant. Let $N=q^k$, and identify $[N] \cong \F_q^k$. Consider the following 3-player function over $[N] \times [N] \times [N]$. 
$$
D(x,y,z) = \1[\ip{x}{y}=\ip{x}{z}=\ip{y}{z}],
$$
Then:
\begin{enumerate}
    \item The randomized NOF communication complexity of $D$ (with public randomness) is $O(1)$.
    \item The deterministic (or non-deterministic) NOF communication complexity of $D$ is $\Omega((\log N)^{1/2})$.
\end{enumerate}
\end{theorem}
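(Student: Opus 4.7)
Each of the three equalities $\ip{x}{y}=\ip{x}{z}=\ip{y}{z}$ involves only two of the three inputs, and in the NOF model each player sees all inputs except the one on their forehead. Thus the player with $x$ on their forehead can compute $\ip{y}{z}$, and similarly the other players compute $\ip{x}{z}$ and $\ip{x}{y}$. The task reduces to $3$-party equality testing on three elements of $\F_q$, which is solvable with $O(1)$ communication using public randomness and a standard fingerprinting scheme.

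\textbf{Lower bound via slice-function cover.} I will use the classical characterization that the nondeterministic NOF complexity of $D$ equals $\lceil\log_2 R\rceil$, where $R$ is the minimum number of cylinder intersections needed to cover $D$; it suffices to show $R \geq 2^{\Omega(\sqrt{\log N})}$. Assume $D = \bigcup_{i=1}^R S_i$ with each $S_i$ a cylinder intersection contained in $D$ (which can be assumed WLOG for any valid nondeterministic protocol). A direct calculation over $\F_q^k$ gives $\E[\1_D] = q^{-2}(1+o(1))$, so $\E[\1_{S_i}] \leq 2^{-d_i}$ with $d_i \geq 2\log q - O(1)$. Applying \cref{thm: removal lemma} to each $S_i$ produces a pointwise cover $\1_{S_i}\leq F_i'=\sum_j s_{i,j}$ by slice functions, with $\E[F_i']\leq 2^{-\Omega(\sqrt{d_i})}\cdot O(\log N)\leq 2^{-\Omega(\sqrt{\log q})}\cdot O(\log N)$. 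Summing, $\1_D\leq \sum_{i,j}s_{i,j}$ pointwise, so
\[
|D| \;\leq\; \sum_{i,j}\langle \1_D, s_{i,j}\rangle \;=\; \bigl\langle \1_D,\, \textstyle\sum_i F_i'\bigr\rangle.
\]

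\textbf{Main obstacle: pseudorandomness of $D$ against slice functions.} To close the argument, I need a \emph{one-sided} pseudorandomness estimate: for every slice function $s$ with $\E[s]$ above a suitable threshold,
\[
\langle \1_D, s\rangle \;\leq\; O(q^{-2})\cdot \E[s]\cdot N^3.
\]
Granting this (plus a routine trivial bound $\langle \1_D,s\rangle\leq |s|$ for the slices of tiny density), the previous display yields $q^{-2}N^3\leq O(q^{-2})N^3\cdot R\cdot 2^{-\Omega(\sqrt{\log q})}\log N$, hence $R\geq 2^{\Omega(\sqrt{\log q})}/\log N = 2^{\Omega(\sqrt{\log N})}$ for constant $k$. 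Proving the pseudorandomness estimate is the technical heart. For $s(x,y,z)=g(x,y)h(z)$, the plan is to Fourier-expand the two linear constraints $\ip{x}{z}=\ip{y}{z}=\ip{x}{y}$ on $z$ over $\F_q^k$: this peels off a main term $|g||h|/q^2$ and leaves an error
\[
\frac{1}{q^2}\sum_{(a,b)\neq 0}\sum_{(x,y)\in g}\hat h(ax+by)\,\psi\!\bigl(-(a+b)\ip{x}{y}\bigr),
\]
where $\psi$ is a nontrivial additive character and $\hat h$ is the Fourier transform of $\1_h$ on $\F_q^k$. A direct Cauchy--Schwarz+Parseval bound on each $(a,b)$-slice of this sum is too weak; I will need to exploit the bilinear/quadratic phase $\psi(-(a+b)\ip{x}{y})$ via a further round of Cauchy--Schwarz (decoupling over $x$ or $y$) combined with a Weil-type character sum estimate to gain the required $q^{-O(1)}$ factor. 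This is precisely the step where the new \emph{one-sided} hardness condition — made available by the slice-function removal lemma — replaces the stronger two-sided pseudorandomness used in \cite{KelleyLM2024-nof}; the other two orderings of coordinates in the definition of a slice function are handled symmetrically.
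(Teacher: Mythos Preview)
Your overall strategy is correct and is essentially the paper's own argument, just unrolled. The paper packages the step ``removal lemma + evasiveness $\Rightarrow$ cylinder intersections cannot capture much of $D$'' as a separate statement (\cref{thm: pseudorandom for CI from cubes}) and then applies it to the single largest cylinder intersection in the cover; you instead apply \cref{thm: removal lemma} to every $S_i$ and sum. Both routes give the same bound.

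Two comments on emphasis. First, the evasiveness of $D$ against slice functions is not the technical heart here---it is the already-established input, quoted in the paper as \cref{thm: construction of pseudorandom sets} (from \cite{KelleyLM2024-nof}). You do not need Weil-type sums or a second round of decoupling: the proof sketch in the paper fixes $(x,y)$ with $\ip{x}{y}=c$ and then uses only that inner product is a good two-source extractor (equivalently, a single Cauchy--Schwarz / Parseval step) to count $z\in h$ with $\ip{x}{z}=\ip{y}{z}=c$. Your Fourier expansion with the bilinear phase $\psi(-(a+b)\ip{x}{y})$ is doing unnecessary work. Second, since \cref{thm: removal lemma} guarantees each slice $s_{i,j}$ has density at least $2^{-O(d_i)} = q^{-O(1)}$ and the evasiveness of $D$ holds down to density $q^{-\Omega(k)}$, for large enough constant $k$ all slices are automatically above threshold; your ``routine trivial bound for tiny slices'' is never invoked.
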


\paragraph*{A pseudorandomness perspective.} Quantitative improvements aside, we argue that the statement of \cref{thm: removal lemma} is interesting for some additional qualitative reasons. 
Firstly, it succinctly captures the exact weakness of cylinder intersections which was targeted in \cite{KelleyLM2024-nof} to obtain a communication lower bound.\footnote{Indeed, it can be checked that the key pseudorandomness statement, Theorem 2.4 from \cite{KelleyLM2024-nof}, follows directly from our slice function removal lemma.}
Additionally, it is interesting to compare the statement with some structurally related results in extremal combinatorics and unconditional pseudorandomness. 

For example, consider the collection of indicator functions $F$ defined on the space $\Omega = \bits^{n}$ which have the form
$$ F(x_1, x_2, \ldots, x_n) = \prod_{S \in \binom{[n]}{w}} f_{S}(x_S). $$
It was shown by Bazzi\footnote{\cite{bazzi2009polylogarithmic}. See also \cite{razborov2009simple}, \cite{de2010improved}, \cite{tal2017tight}.} that these functions (which are equivalently just the collection of width-$w$ CNFs) have upper and lower pointwise approximators
 $F_{\ell}(x) \leq F(x) \leq  F_{u}(x), $
which give a good approximation on average (for uniformly random points $x \in \{0,1\}^n$) in the sense that $\E_{x\in\bits^n} [F_u(x) - F_\ell(x)] \leq \eps$.
Furthermore, $F_\ell(x)$ and $F_u(x)$ are \textit{simple functions} -- specifically, polynomials of degree at most $k = \text{poly}(w,\log(n/\eps))$. 
He uses this structural result to prove that $k$-wise independent distributions are pseudorandom for width-$w$ CNFs, which follows because $k$-wise independent distributions are pseudorandom for degree $k$ polynomials. Our structural result can be applied in much the same way: it implies that any set $D$ which is evasive\footnote{By ``evasive", we just mean a set $D$ with a one-sided pseudorandomness property: no simple test function which captures a small fraction of $\Omega$ should be able to capture a large fraction of $D$.} for small slice functions must also be reasonably evasive for small cylinder intersections -- this fact is all that is needed for the proof of \cref{thm: main nof lb}.

It was noted in \cite{de2010improved} that the existence of good degree-$k$ sandwiching approximators $F_{\ell}(x) \leq F(x) \leq  F_{u}(x)$, for some function $F$ is, in fact, \emph{equivalent} to the fact that every $k$-wise independent distribution is pseudorandom for $F$ (with two-sided, additive error). The argument is based on linear programming duality, and is in fact quite generic: it extends easily to other settings involving a notion of pseudorandomness and a corresponding family of simple test functions. 

An interesting consequence of this is that, if we have some proof of a pseudorandomness statement which is ``abstract enough" (for example, it works for \emph{all} $k$-wise independent distributions, rather than just a particular construction), then we can actually extract a \emph{structural result} by linear programming duality. 
This is roughly how we obtain our main result, \cref{thm: removal lemma}. First, we argue that an arbitrary distribution which is evasive for slice functions must be reasonably evasive for cylinder intersections. Then, by a different linear programming duality argument (which is needed for the one-sided and multiplicative quantification of error relevant here), we extract a good fractional cover, and then round it to obtain a cover. 

\paragraph{Triangle Removal Lemma.}
Consider the following (unusual) formulation of Szemerédi's triangle removal lemma. For the purpose of comparing with \cref{thm: removal lemma}, we formulate it as a statement \textit{about cylinder intersections}. 

\begin{theorem}[Triangle Removal Lemma] \label{thm: triangle removal lemma}
Let $f(x,y),g(x,z),h(y,z)$ be some indicator functions on $\Omega := X \times Y \times Z$, and consider the cylinder intersection
$$ F(x,y,z) := f(x,y) g(x,z) h(y,z). $$
Suppose that $F$ is small, i.e.\
$$ \E_{(x,y,z) \in \Omega}[F(x,y,z)]  \leq \eps. $$
Then $F$ has a reasonably small (pointwise) cover $F'$, which has the form
$$ F(x,y,z) \leq F'(x,y,z) := f'(x,y) + g'(x,z) + h'(y,z). $$
Specifically,
$$ \E_{(x,y,z) \in \Omega}[F'(x,y,z)]  \leq \delta := \delta(\eps) $$
for some $\delta(\eps)  \rightarrow 0$ as $\eps \rightarrow 0$.
\end{theorem}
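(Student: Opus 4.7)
The plan is to prove this via Szemer\'edi's regularity lemma, adapted to the tripartite cylinder-intersection setting. View $f,g,h$ as the adjacency indicators of three bipartite graphs between $X,Y$, between $X,Z$, and between $Y,Z$, so that $F$ is precisely the indicator of triangles in the resulting tripartite graph. A pointwise cover $F\le f'(x,y)+g'(x,z)+h'(y,z)$ with $f',g',h'$ indicators then corresponds to a set of edges hitting every triangle --- equivalently, a set of edges whose deletion leaves no triangles. Thus it suffices to show that the assumption of few triangles implies that few edges suffice to destroy them all.

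First I would apply the regularity lemma simultaneously to the three bipartite graphs with regularity parameter $\eta=\eta(\eps)$ to be chosen. This produces equitable partitions of $X$, $Y$, and $Z$ into a bounded (tower-type in $1/\eta$) number of parts such that all but an $\eta$-fraction of triples of parts are $\eta$-regular in each of the three graphs. The cover $f'+g'+h'$ is built by including every edge that either (a) belongs to an irregular pair, (b) belongs to a pair of density below $\eta$, or (c) touches one of the negligible ``tiny'' parts. A direct count shows that the total density of cover edges is $O(\eta)$, giving $\E[F']=O(\eta)$.

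Next comes the counting step: any surviving edge (one not placed in $f'$, $g'$, or $h'$) sits in a pair that is $\eta$-regular and has density at least $\eta$. If some triple of parts $(X_i,Y_j,Z_l)$ still contains a surviving triangle, the standard triangle counting lemma for regular triples produces $\Omega(\eta^3)\,|X_i|\,|Y_j|\,|Z_l|$ triangles of $F$ inside that triple. Summing, this contradicts $\E[F]\le\eps$ as soon as $\eta$ is small enough compared to $\eps$ and to the (bounded) number of parts. So no surviving triangle exists, $F\le F'$ pointwise, and one takes $\delta(\eps):=O(\eta(\eps))$.

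The main obstacle is purely quantitative. The function $\delta(\eps)$ produced by this outline depends on $\eps$ in a tower-type way, inherited from the regularity lemma, and along this route that is essentially the best one can hope for. Obtaining a substantially better bound would require either the more delicate regularity-free graph removal argument of Fox (saving roughly one tower level) or an altogether different, more algebraic approach. For the qualitative conclusion $\delta(\eps)\to 0$ as $\eps\to 0$ that the statement actually asks for, however, the regularity-based plan above is sufficient.
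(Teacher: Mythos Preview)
Your plan is correct and is precisely the classical regularity-based proof of the triangle removal lemma. Note, however, that the paper does not actually supply its own proof of this statement: \cref{thm: triangle removal lemma} is presented only as a reformulation of the standard (graph) triangle removal lemma, for the purpose of comparing it structurally with the paper's own slice-function removal lemma (\cref{thm: removal lemma}). The paper merely observes that the cylinder-intersection phrasing and the usual graph phrasing ``differ only cosmetically'' (with the one nontrivial remark being the reduction to the tripartite case), and cites the literature (e.g.\ Fox) for the known tower-type bounds. So there is nothing in the paper to compare your argument against beyond this equivalence observation; your sketch fills in the classical proof that the paper takes for granted.
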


Recall that the standard formulation of the triangle removal lemma is stated in the language of graphs: it says that if an $N$-vertex graph has a small number of triangles (i.e., at most $\eps N^3$), then it is possible to remove a reasonably small number of edges (i.e., at most $\delta N^2$) in order to eliminate all triangles. One can observe that the two formulations differ only cosmetically.\footnote{Perhaps the only non-obvious step is to notice that the triangle removal lemma is equivalent (up to constant factors) to the special case of itself on tripartite graphs.}

It is not known what should be the ``right" quantitative dependence of $\delta$ on $\eps$ in the triangle removal lemma. Indeed, there is a large gap between the known upper bounds on $\delta(\eps)$ (which are tower-type, see e.g.\ \cite{fox2011new}), and the best known lower bounds on plausible values for $\delta(\eps)$, which follow from Behrend-type constructions of large sets of integers $A \subseteq [N]$ without $3$-term arithmetic progressions. In light of these constructions, the best plausible dependence of $\delta$ on $\eps$ would be (roughly) 
$$ \delta(\eps) \approx 2^{-\sqrt{\log(1/\eps)}}. $$
Coincidentally, this corresponds to the same parameters we obtain in our slice function removal lemma. So, to conclude our comparison, we note the following. At present, the statements (\cref{thm: removal lemma} and \cref{thm: triangle removal lemma}) are incomparable: the former obtains substantially better parameters, but the latter provides a cover by a (strictly) simpler collection of functions -- \emph{cylinders} instead of \emph{slice functions}. However, in the event that dramatically improved bounds for the triangle removal lemma were to be discovered, one could conceivably obtain a result strictly improving both statements.\footnote{We point out that, as far as we are aware, it could be possible to give a strict \emph{quantitative} improvement to just our slice function removal lemma itself. That is, we do not know of any sort of obstruction (Behrend-type or otherwise) to obtaining substantially improved parameters when we cover by slice functions. It would be interesting to find such an obstruction if there is one.}

\paragraph{An Open Question for $k$-party NOF.}
It is interesting to understand to what extent the techniques introduced in \cite{KelleyLM2024-nof} can be extended to obtain explicit separations for $k$-party deterministic and randomized NOF protocols, for $k \geq 4$. Concretely, we ask: can our covering result \cref{thm: removal lemma}, can be extended to apply to cylinder intersections $F(x_1, x_2, \ldots, x_k)$ defined over larger product spaces $X_1 \times X_2 \times \cdots \times X_k$? Can we also obtain a reasonably small cover $F'$ of $F$ by some slice functions? What about covering by ``generalized rectangles" -- functions of the form $g(x_{S})h(x_{T})$ for some partition $[k] = S \cup T$?
It seems that such questions are related to the extent to which the new graph-theoretic machinery (mentioned in the introduction) can be extended to \emph{hypergraphs}. If so, it seems likely this would be useful for applications beyond multiparty communication complexity, as well.

\section{Preliminaries}\label{sec: preliminary}

We usually work with a finite universe $\Omega$. We write $w\sim \Omega$ to denote that $w$ is drawn from $\Omega$ uniformly at random. Alternatively, we also use $\mathcal{U}$ to denote the uniform distribution. A function $p:\Omega\to \mathbb{R}_{\ge 0}$ is called a probability density function if $\E_{w\sim \Omega}[p(w)] = \sum_{w} \frac{p(w)}{|\Omega|} = 1$. To sample from $p$, we draw each $w\in \Omega$ with probability $\frac{p(w)}{|\Omega|}$. For any set $S\subseteq \Omega$, $w\sim S$ denotes the uniform distribution over $S$. Finally, we will frequently use the following change-of-distribution identity:
\begin{align}
 \E_{w\sim p}[S(w)] = \frac{|S|}{|\Omega|} \cdot \E_{w\sim S}[p(w)]. \label{equ:change-of-dist}
\end{align}
Let $X$ be a finite domain. The $\ell_p$-norm of a vector $u\in \mathbb{R}^X$ is defined as $\|u\|_{p} \coloneqq \E_{x\sim X}[|u_x|^p]^{1/p}$. For two vectors $u,v\in \mathbb{R}^X$ of the same dimension, their inner product is $\langle u, v\rangle \coloneqq \E_{x\sim X} [u_x \cdot v_x]$. For every $p\ge 1$, let $p^\star \coloneqq \frac{p}{p-1}$ be the H\"older conjugate of $p$. The H\"older inequality says that $\langle f, g\rangle \le \| f\|_p \cdot \|g \|_{p^\star}$. We collect some elementary equations regarding H\"older conjugates that we frequently use without notice:
\begin{align}
\frac{1}{p-1} = p^\star - 1 \text{~~~~,~~~~} p^\star = \frac{p}{p-1} \text{~~~~,~~~~}  \frac{1}{p} - 1 = -\frac{1}{p^\star}.
\end{align}
For a matrix $M\in \mathbb{R}^{X\times Y}$ and vectors $u\in \mathbb{R}^X, v\in \mathbb{R}^Y$, we define the \emph{normalized} matrix-vector product $Mg\in \mathbb{R}^{X}$ as $(Mg)(x) = \E_{y}[M(x,y)\cdot g(y)]$ and vector-matrix-vector product $f^\top M g = \E_{x,y}[f(x) M(x,y) g(y)]$. Finally, given $\ell, k\ge 1$, the $U(\ell, k)$-grid norm of $M$ is defined as
\begin{align}
    \|M\|_{U(\ell, k)} := \left( \E_{x_{1:\ell}} \E_{y_{1:k}} \left[ \prod_{i\in [\ell]}\prod_{j\in [k]} M(x_i, y_j) \right] \right)^{1/\ell k}. \label{equ: defn grid norm}
\end{align}

\subsection{Norms and Flat Norms of Vectors}

For a vector $v\in \mathbb{R}_{\ge 0}^{X}$, we may define its $\ell_p$-norm by
\begin{align}
\| v \|_{p} \coloneqq \E_{x\sim X} [(v_x)^p ]^{1/p} = \max_{u\in \mathbb{R}^X}\left[ \frac{\langle v, u \rangle}{\|u\|_{p^\star}} \right]. \label{equ: def vector norm}
\end{align}

We consider a related notion of the \emph{flat} $p$-norm\footnote{
This analytic quantity is closely related to the \emph{weak} $p$-norm, and is well known, but seems to not go by any common name. See e.g.\ Exercise 1.1.12 in \cite{grafakos2008classical}.
}, defined as
\begin{align}
\|v\|_{\overline{p}} \coloneqq \max_{S\subseteq X} \left\{ \| v|_S \|_1 \left( \frac{|S|}{|X|} \right)^{1/p} \right\}. \label{equ:def vector flat norm}
\end{align}
Thus, the flat $p$-norm is just the result one obtains by writing down the dual characterization of the $p$-norm, but allowing only \emph{flat} witnesses: vectors $u$ of the form $u = \1_{S}$ for some set $S$. 
Here, $v|_S\in \mathbb{R}^S$ denotes the restriction of $v$ onto the subset $S$ of coordinates (and $\| v|_S \|_1$ denotes the $1$-norm normalized relative to $S$, i.e.,  $\| v|_S \|_1 = \E_{x \sim S} |v_{x}|$).

The following proposition states the flat $p$-norm and the usual $\ell_p$-norm are closely related for all \emph{smooth} vectors.
\begin{prop}\label{prop:vector norm vs flat norm}
    Let $k\ge 1$ be a parameter. Suppose $v\in \mathbb{R}_{\ge 0}^{X}$ is such that $\frac{\|v\|_{\infty}}{\|v\|_{1}}\le 2^k$. Then, for every $\eps \in (0, 1/10)$ and $p\ge \frac{10 \log(k)}{\eps}$, it holds that
    \begin{align}
        \|v\|_{p} \ge \| v \|_{\overline{p}} \ge (1-\eps) \| v \|_p.
    \end{align}
\end{prop}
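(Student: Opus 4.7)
The plan is to address the two inequalities separately. The upper bound $\|v\|_p \ge \|v\|_{\overline p}$ is immediate from the dual characterization \eqref{equ: def vector norm} restricted to flat witnesses $u = \1_S$. Indeed $\langle v, \1_S\rangle = (|S|/|X|)\cdot \|v|_S\|_1$ and $\|\1_S\|_{p^\star} = (|S|/|X|)^{1/p^\star}$, so the dual formula gives $(|S|/|X|)^{1/p}\,\|v|_S\|_1 \le \|v\|_p$ for every $S \subseteq X$, and taking the supremum over $S$ proves the inequality.

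For the substantive direction $\|v\|_{\overline p} \ge (1-\eps)\|v\|_p$, my plan is a layer-cake decomposition coupled with a fine geometric discretization of level sets. Write $M := \|v\|_\infty$ and $\mu(t) := \Pr_{x \sim X}[v_x \ge t]$, so that the layer-cake identity gives
\[
\|v\|_p^p \;=\; p \int_0^M t^{p-1}\,\mu(t)\,dt.
\]
The key observation is that the threshold set $S_t := \{x : v_x \ge t\}$ already certifies the flat norm at scale $t$: $\|v|_{S_t}\|_1 \ge t$ and $|S_t|/|X| = \mu(t)$, so $\|v\|_{\overline p}^p \ge t^p\mu(t)$ for every $t \ge 0$. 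Thus each individual level of the integrand is pointwise controlled by a single flat witness, and the remaining task is to aggregate the levels without losing too much.

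To aggregate, I would partition the range of integration into geometric bands $[t_{i+1}, t_i)$ with $t_i := M(1+\eta)^{-i}$ for a small tuning parameter $\eta$, truncating at some index $L$. Since $\mu$ is monotone non-increasing, each band contributes at most $\mu(t_{i+1})(t_i^p - t_{i+1}^p) \le (1+\eta)^p\,t_{i+1}^p\mu(t_{i+1}) \le (1+\eta)^p\|v\|_{\overline p}^p$, so the first $L$ bands together contribute at most $L(1+\eta)^p\|v\|_{\overline p}^p$. The tail integral below $t_L$ is bounded by $t_L^p$; the hypothesis $\|v\|_\infty/\|v\|_1 \le 2^k$ combined with $\|v\|_p \ge \|v\|_1$ (Jensen) yields $\|v\|_p \ge M/2^k$, so setting $t_L$ slightly below $\eps\|v\|_p$ renders the tail a negligible relative error and requires only $L = O(k/\eta)$ bands (up to lower-order terms from a $\log(1/\eps)/p$ contribution that is swamped for $p$ large). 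Combining, $\|v\|_p^p \le L(1+\eta)^p\|v\|_{\overline p}^p + \eps^p\|v\|_p^p$.

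The main obstacle is the quantitative parameter balancing. We need $(1+\eta)^p \le 1+O(\eps)$, which forces $\eta \asymp \eps/p$, and $L^{1/p} \le 1+O(\eps)$, which forces $p \gtrsim \log(L)/\eps$. The hypothesis $p \ge 10\log(k)/\eps$ is designed precisely so that this choice is simultaneously feasible once one substitutes $L = O(kp/\eps)$; rearranging the resulting inequality yields $\|v\|_p \le (1-\eps)^{-1}\|v\|_{\overline p}$, equivalent to the claim. No step of the plan uses any deep machinery; the substance lies entirely in arranging the discretization scale $\eta$ and the truncation threshold $t_L$ so that the three sources of error (discretization factor $(1+\eta)^p$, number-of-bands factor $L^{1/p}$, and truncated tail) all fit simultaneously within the target tolerance $\eps$.
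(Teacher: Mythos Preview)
Your argument is correct and is essentially the same approach as the paper's: a geometric decomposition of the range of $v$ into $O(k/\eps)$ levels, using that each (super)level set furnishes a flat witness, followed by the observation that the factor $L^{1/p}$ is $1+O(\eps)$ once $p \gtrsim \log(L)/\eps$. The paper phrases this as a dyadic partition of the entries of $v$ and an averaging/pigeonhole step to extract one good level set $S$, whereas you phrase it via the layer-cake integral and sum the contributions of all threshold sets $S_t$; these are dual presentations of the same idea.

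One small correction to your parameter discussion: you do not need $(1+\eta)^p \le 1+O(\eps)$, only $(1+\eta) \le 1+O(\eps)$, since after taking $p$-th roots the bound reads $\|v\|_p \le L^{1/p}(1+\eta)\,\|v\|_{\overline p}$ (up to the negligible tail). Thus $\eta \asymp \eps$ already suffices, giving $L = O(k/\eps)$ rather than $O(kp/\eps)$, which matches the paper's band count $d \approx k/\eps$ exactly. Your overly conservative choice $\eta \asymp \eps/p$ still works (the extra $\log p$ in $\log L$ is lower order), so this does not affect correctness.
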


In fact, the first inequality holds for any $p\ge 1$ without any restriction on $p$.

\begin{proof}
    To see the first inequality, simply note that for any $S\subseteq X$, we may consider the indicator vector $f\in \mathbb{R}^{X}$ defined by $f_x = \one{x\in S}$. We have
    \begin{align}
        \frac{\langle v, f\rangle}{\|f\|_{p^\star}} = \frac{|S|}{|X|} \cdot \| v|_S \|_1 \cdot \left( \frac{|S|}{|X|} \right)^{-\frac{1}{p^\star}}  = \| v|_S\|_1 \left( \frac{|S|}{|X|} \right)^{\frac{1}{p}}.  \notag
    \end{align}
    This means any candidate maximizer $S\subseteq X$ for $\|v\|_{\overline{p}}$ gives a candidate maximizer for $\|v\|_{p}$. Optimizing over all possible $S$ proves the inequality.

    We turn to establish the second inequality. By properly scaling, we can assume W.L.O.G. $\|v\|_{\infty} = 1$. With this in mind, we consider a geometric decomposition 
    \begin{align}
        v = \sum_{j=0}^{-\infty} v^j \notag
    \end{align}
    where $v^j$ is the vector that contains entries of $v$ with value $(2^{\eps (-j-1)},2^{\eps j}]$. Namely, $(v^j)_x = v_x$ if $v_x \in (2^{\eps (-j-1)},2^{\eps j}]$ and $v^j_x = 0$ otherwise. Since $v^j$'s have disjoint support, we have
    \begin{align}
        \|v\|_p^p = \sum_{j=0}^{\infty} \|v^j\|_p^p. \notag
    \end{align}
    For a suitable $d \ge \frac{\log(\frac{4\|v\|_{\infty}}{\|v\|_p})}{\eps} \approx \frac{k}{\eps}$, we have
    \begin{align}
        \left\| \sum_{j \ge d} v^j \right\|_{p} \le \left\| \sum_{j \ge d} v^j \right\|_{\infty} \le 2^{-\eps d} \le \frac{\|v\|_p}{4}. \notag
    \end{align}
    Consequently,
    \begin{align}
        \sum_{j=0}^{d-1} \left\|  v^j \right\|_{p}^p 
        &= \| v\|_p^p - \left\| \sum_{j\ge d} v^j \right\|_{p}^p \ge \frac{1}{2} \|v\|_p^p. \notag
    \end{align}
    By averaging principle, there exists $j^*<d$ such that $\left\|  v^{j^*} \right\|_{p}^p \ge \frac{1}{2d} \|v\|_p^p$. Letting $S$ be the support of $v^{j^*}$, we obtain
    \begin{align}
        &~~~~\| v|_S \|_1 \cdot \left( \frac{|S|}{|X|} \right)^{\frac{1}{p}} \ge 2^{(-j^*-1)\eps } \left( \frac{|S|}{|X|} \right)^{\frac{1}{p}} \notag \\
        &\ge  2^{\eps } \| v^{j^*} \|_p  \ge 2^{\eps }\cdot (\frac{1}{2d})^{1/p}\cdot \|v\|_p \notag \\
        &\ge (1-O(\eps)) \| v\|_p. \label{equ: flat vs norm conclusion}
    \end{align}
    To replace $(1-O(\eps))$ in \eqref{equ: flat vs norm conclusion} with a $(1-\eps)$, we can work with a smaller $\eps' = \Theta(\eps)$ in this argument. the proof will go through so long as we have, say, $p\ge \frac{2 d}{\eps'}$.
\end{proof}

\subsection{Norms and Flat Norms of Matrices}

We can similarly develop a norm-vs-flat norm connection for matrices and operator norms. In particular, consider a matrix $M\in \mathbb{R}^{X\times Y}$ whose rows and columns are indexed by $X$ and $Y$, respectively. We let $\|M\|_1 = \E_{x\sim X,y\sim Y}[M(x,y)]$ be the $1$-norm of $M$.

Given $\ell, r \ge 1$ and a \emph{non-negative} matrix $M$, define the \emph{flat} $(\ell, r)$-norm of $M$ as
\begin{align}
\| M \|_{\overline{\ell},\overline{r}} = \max_{S\subseteq X, T\subseteq Y} \left\{ \| M|_{S, T} \|_1 \cdot \left( \frac{|S|}{|X|} \right)^{\frac{1}{\ell}} \left( \frac{|T|}{|Y|}\right)^{\frac{1}{r}} \right\}. \label{equ: def matrix flat norm}
\end{align}

We can then compare the flat norms with the usual operator norm. In particular, we define the $(\ell,r)$-norm of a non-negative matrix $M$ as
\begin{align}
\| M \|_{\ell, r} = \max_{\substack{f:X\to \mathbb{R}_{\ge 0},\\ g:Y\to \mathbb{R}_{\ge 0}}} \left\{ \frac{f^\top M g}{ \|f\|_{\ell^\star} \|g\|_{r^\star}} \right\} \coloneqq \max_{\|g\|_{r^\star} \leq 1} \|M g \|_{\ell} . \label{equ: def matrix flat norm}
\end{align}
From an operator norm perspective, one might find it easier to understand $\|M \|_{\ell, r}$ as $\|M\|_{\ell\gets r^\star}$ (i.e., how $M$ maps a $r^\star$-norm bounded vector to a vector with bounded $\ell$-norm). 


The following lemma connects flat norms with operator norms for smooth matrices.

\begin{lemma}\label{lemma:flat-to-operator-norm}
Suppose $M\in {\R_{\ge 0}}^{X\times Y}$ is such that $\|M\|_{\infty} \le \|M\|_1 \cdot 2^d$. Then, for any $\eps\in (0, 1)$, provided that $\min(\ell, r)\ge \frac{20 \log d}{\eps}$, we have
\begin{align}
\|M\|_{\ell, r} \ge \|M\|_{\overline{\ell},\overline{r}} \ge (1-\eps)\cdot \| M \|_{\ell, r}. \label{equ:matrix flat vs operator norm} 
\end{align}
\end{lemma}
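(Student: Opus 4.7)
The plan is to adapt the proof of \cref{prop:vector norm vs flat norm} to the bilinear setting, with the pair of dual witnesses $(f, g)$ from \eqref{equ: def matrix flat norm} playing the role of the single vector $v$. The first inequality $\|M\|_{\ell, r} \ge \|M\|_{\overline \ell, \overline r}$ is direct and holds without any smoothness hypothesis: for any $S \subseteq X$, $T \subseteq Y$, substituting $f = \mathbf{1}_S$, $g = \mathbf{1}_T$ gives $\mathbf{1}_S^\top M \mathbf{1}_T/(\|\mathbf{1}_S\|_{\ell^\star}\|\mathbf{1}_T\|_{r^\star}) = \|M|_{S,T}\|_1 (|S|/|X|)^{1/\ell}(|T|/|Y|)^{1/r}$, so taking the supremum over $(S, T)$ yields the claim.

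For the second inequality, I would start from near-optimal non-negative $f, g$ with $\|f\|_{\ell^\star} = \|g\|_{r^\star} = 1$ and $f^\top M g \ge (1-\eps/3)\|M\|_{\ell, r}$, and geometrically decompose each into $2^\eps$-flat layers $f = \sum_{i \in I} f^i$ with $f^i := f \cdot \mathbf{1}\{f \in (2^{-\eps(i+1)}, 2^{-\eps i}]\}$, and similarly $g = \sum_{j \in J} g^j$. Each $f^i$ is pointwise sandwiched between $\alpha_i \mathbf{1}_{S_i}$ and $2^\eps \alpha_i \mathbf{1}_{S_i}$ for $S_i := \mathrm{supp}(f^i)$, so the flat-norm ratio for the indicator witness $(S_i, T_j)$ satisfies $R_{ij} \ge 2^{-2\eps} c_{ij}/(\|f^i\|_{\ell^\star}\|g^j\|_{r^\star})$ with $c_{ij} := (f^i)^\top M g^j$. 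Writing $a_i := \|f^i\|_{\ell^\star}^{\ell^\star}$, $b_j := \|g^j\|_{r^\star}^{r^\star}$ (so $\sum_i a_i = \sum_j b_j = 1$ by disjointness of supports), setting $R^* := \max_{i,j} R_{ij}$, and combining the pointwise bound $c_{ij} \le 2^{2\eps} R^* a_i^{1/\ell^\star} b_j^{1/r^\star}$ with the H\"older estimates $\sum_i a_i^{1/\ell^\star} \le |I|^{1/\ell}$, $\sum_j b_j^{1/r^\star} \le |J|^{1/r}$ yields
\[
(1-\eps/3)\|M\|_{\ell, r} \;\le\; f^\top M g \;\le\; 2^{2\eps} R^* \, |I|^{1/\ell} \, |J|^{1/r}.
\]

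The main obstacle will be controlling the layer counts $|I|$ and $|J|$, since $f, g$ carry no a-priori smoothness. To handle this, I would first truncate $f, g$ to an effective range $[2^{-B}, 2^B]$ (for appropriate $B$), discarding entries whose contribution to $f^\top M g$ is $O(\eps)\|M\|_{\ell, r}$. The needed tail bounds come from the smoothness of $M$: after normalizing $\|M\|_1 = 1$, one has $\|Mg\|_\infty \le \|M\|_\infty \|g\|_{r^\star} = 2^d$, which controls the lower tail $\E[f\mathbf{1}\{f < 2^{-B}\}\cdot Mg] \le 2^{d - B}$ directly; Markov's inequality $\mathbb{P}[f > 2^B] \le 2^{-B\ell^\star}$ combined with H\"older gives $\E[f\mathbf{1}\{f > 2^B\}] \le 2^{-B/(\ell-1)}$, so the upper-tail contribution is $\le 2^{d - B/(\ell-1)}$. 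Balancing yields $B = O(\ell d)$ for truncating $f$ (analogously $O(r d)$ for $g$), whence $|I| = O(\ell d/\eps)$ and $|J| = O(rd/\eps)$. The hypothesis $\min(\ell, r) \ge 20 \log(d)/\eps$ then makes $|I|^{1/\ell} |J|^{1/r} = 2^{O(\eps)}$, so $R^* \ge (1 - O(\eps)) \|M\|_{\ell, r}$; absorbing constants into $\eps$ recovers the stated bound.
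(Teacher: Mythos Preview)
Your argument is correct, but it proceeds differently from the paper. The paper does not decompose the dual witnesses $f,g$ at all; instead it observes that the first optimization $\max_h h^\top Mg/\|h\|_{\ell^\star}$ is exactly $\|Mg\|_\ell$, while restricting $h$ to indicators gives $\|Mg\|_{\overline\ell}$, so \cref{prop:vector norm vs flat norm} applied to the vector $v=Mg$ (which inherits the smoothness $\|v\|_\infty\le 2^d$ directly from $M$) yields a flat $\overline f$ in one shot; then the same trick on $M^\top\overline f$ produces $\overline g$. Your route---simultaneously layering both $f$ and $g$ into $2^\eps$-flat pieces and picking the best pair $(S_i,T_j)$ via the H\"older bound $\sum_i a_i^{1/\ell^\star}\le |I|^{1/\ell}$---is a valid bilinear analogue of the proof of \cref{prop:vector norm vs flat norm}, but it forces you to control the layer counts for $f,g$ themselves, which carry no smoothness; hence the extra truncation step and the larger range $B=O(\ell d)$ (rather than $O(d)$) before $|I|^{1/\ell}=2^{O(\eps)}$ kicks in. The paper's sequential reduction is cleaner because the smoothness of $M$ transfers automatically to $Mg$, eliminating the need for any tail-truncation of the witnesses; your direct bilinear argument, on the other hand, makes the double H\"older structure explicit and would generalize more readily to multilinear settings.
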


\begin{proof}
    To see the first inequality, simply note that for any $S\subseteq X,T\subseteq Y$, we may consider the indicator functions $f\coloneqq \mathbf{1}_{x\in S}$ and $g\coloneqq \mathbf{1}_{y\in T}$, and derive that
    \begin{align}
        &~~~~\frac{f^\top M g}{\|f\|_{\ell^\star}\|g\|_{r^\star}} \notag \\
        &= \| M\mid_{S\times T}\|_1\cdot \left(\frac{|S|}{|X|}\right)^{1-\frac{1}{\ell^\star}} \left( \frac{|T|}{|Y|} \right)^{1-\frac{1}{r^\star}} \notag \\
        &=  \| M\mid_{S\times T}\|_1\cdot \left(\frac{|S|}{|X|}\right)^{\frac{1}{\ell}} \left( \frac{|T|}{|Y|} \right)^{\frac{1}{r}}  \notag
    \end{align}
    This means any candidate subsets $S,T$ for $\|M\|_{\overline{\ell},\overline{r}}$ give a candidate maximizer for $\|M\|_{\ell, r}$ as well. Optimizing over all possible $S,T$ proves the first inequality.

    We now prove the second inequality. Assume without loss of generality that $\|M\|_1 = 1$. So we have $\|M\|_{\infty}\le 2^d$ and consequently $\|M\|_{\infty\gets p}\le 2^d$ for any $p\ge 1$. Say $\tau \coloneqq \|M\|_{\ell, r}$ is achieved by witnesses $f:X\to \mathbb{R}_{\ge 0}$ and $g:Y\to \mathbb{R}_{\ge 0}$ with $\|f\|_{\ell^\star} = \|g\|_{r^\star} = 1$. Consider the following two optimization problems
    \begin{align}
        \max_{h:X\to \mathbb{R}_{\ge 0}}\left\{ \frac{h^\top M g}{\|h\|_{\ell^\star}} \right\} \text{~~~~vs.~~~~} \max_{h:X\to \{0,1\}}\left\{ \frac{h^\top M g}{\|h\|_{\ell^\star} } \right\}\notag 
    \end{align}
    The optimum of the left problem is realized by taking $h=f$ and, by Proposition~\ref{prop:vector norm vs flat norm}, there exists $\overline{f}:X\to \{0, 1\}$ such that
    \begin{align}
        \frac{\overline{f}^\top M g}{\|\overline{f}\|_{\ell^\star} } \ge (1-\eps)  \frac{f^\top M g}{\|f\|_{\ell^\star} }, \notag
    \end{align}
    provided that $\ell \ge \frac{10 \log d}{\eps}$. Next, one can similarly find $\overline{g}:Y\to \{0,1\}$ (provided that $r\ge \frac{10 \log d}{\eps}$) such that
    \begin{align}
        \frac{\overline{f}^\top M \overline{g}}{\|\overline{f}\|_{\ell^\star} \| \overline{g}\|_{r^\star} } \ge (1-\eps) \frac{\overline{f}^\top M g}{\|\overline{f}\|_{\ell^\star} \| g\|_{r^\star} } \ge \frac{(1-O(\eps))\cdot \tau}{\|\overline{g}\|_{r^\star} \|\overline{f}\|_{\ell^\star}} . \notag
    \end{align}
    The sets indicated by $\overline{f},\overline{g}$ would then witness that $\|M\|_{\overline{\ell},\overline{r}} \ge (1-O(\eps)) \|M\|_{\ell, r}$, as desired.
\end{proof}

\section{Spread Matrices and Their Products}\label{sec: spread matrix}

This section refines, simplifies, and improves the technique from \cite[Section 4]{KelleyLM2024-nof}.

\subsection{Proof Overview}\label{sec: spread matrix overview}

Before developing the results, let us comment on how we obtain a better sifting argument than prior works \cite{KelleyM2023strong,KelleyLM2024-nof}. 

\paragraph*{Review of \cite{KelleyLM2024-nof}.} At the core of all the ``sifting'' arguments is a ``density increment'' argument. To review the \cite{KelleyLM2024-nof} proof: we start with a non-negative matrix $M\in \mathbb{R}^{N\times N}$ that is (1) bounded: $\|M\|_{\infty} = 1$ and (2) relatively dense: $\|M\|_1 = \E_{(x,y)}[M(x,y)] \ge 2^{-d}$. We inspect how ``pseudorandom'' $M$ is, as measured by its $U(2,k)$-grid norm $\|M\|_{U(2,k)}$. There are two cases: If $M$ is ``pseudorandom'' enough in the sense that $\| M \|_{U(2,k)} \le (1+\varepsilon) \|M\|_{1}$, we can proceed to argue other nice properties of $M$. Otherwise, $M$ is not pseudorandom and we have $\|M\|_{U(2,k)}\ge (1+\varepsilon)\|M\|_1$. In this case, one tries to find a sub-matrix $X'\times Y'\subseteq [N]\times [N]$, such that the average of $M$ increases by a multiplicative factor of $(1+\varepsilon)$ where $\eps$ is a small constant, and the sub-matrix is of size at least $\frac{1}{2^{O(k d)}}$. On the sub-matrix, one can iterate this argument. Since each iteration increases the density of $M$ by at least $(1+\varepsilon)$. The density increment step only happens for at most $d/\varepsilon$ times, and we are guaranteed a final submatrix of density at least $2^{-O(d^2 k)}$. In the most relevant parameter regime, one first fixes $d = \log(\frac{1}{\text{density of the object}})$ and chooses $k\approx d$. This leads to a $2^{O(d^3)}$ loss in the size of the final sub-matrix, and is ultimately the reason that \cite{KelleyLM2024-nof} ended up with a $\Omega((\log N)^{1/3})$ lower bound for the $3$-party NOF communication model.

\paragraph*{Our improvement.} We show that, with a new and simpler way to sift, one gets a trade-off between how much larger the mean of $M$ increases, and how much smaller the sub-matrix becomes. In particular, we show the following: in each iteration of the density increment argument, if the average of $M$ only increases by a single $(1+\eps)$ factor, then the sub-matrix $X'\times Y'$ only gets smaller by a factor of $2^{-O(k)}$. On the other hand, if we can only increase onto a sub-matrix $X'\times Y'$ of size much smaller than $2^{-\Omega(k)}$, then the average of $M$ will increase by a significantly larger amount. Roughly speaking, there always exists a parameter $1\le \gamma \le d$ and a sub-matrix $X'\times Y'$ of size $2^{-\gamma\cdot k}$ such that $\E_{(x,y)\sim X'\times Y'}[M(x,y)]$ increases by a factor of $(1+\eps)^{(d-\gamma )k}$. Iterating based on this claim allows us to find a sub-matrix $M$ with the desired pseudorandom property, yet the size of the final sub-matrix is at least $2^{-O(kd)}N^2$.

One might find it easier to understand the improvement by drawing an analogy to vector $\ell_k$-norms. Say we have a vector $v\in [0,1]^{N}$ that we know $\|v\|_1 \ge 2^{-d}$ and $\|v\|_k := \E[v_i^k]^{1/k} > (1+\eps) \E[v_i]$. This can happen for (at least) two possible reasons: (1) there is a subset $S\subseteq [N]$ such that $|S| \ge 2^{-k} N$ and $v_i > (1+\frac{1}{2})\|v\|_1$ for $i\in S$, or (2) there is a much smaller subset $S\subseteq [N]$ of size only $|S| > 2^{-dk} N$, but $v_i > 2^{d} \cdot \|v\|_1$ for $i\in S$. Both cases lead to a significantly larger $\ell_k$-norm of $v$. Still, to run an efficient ``density increment'' argument on $v$, one must account for the trade-off between $|S|$ and $\min_{i\in S}\{v_i\}$. 

Currently, our improvement only applies to the $U(2,k)$-grid norms\footnote{We can extend this to $U(\ell, k)$-norms by a tensoring trick. Still, the improvement is only significant in the regime that $\ell \ll k$.}, mostly due to the nice interpretation of $U(2,k)$ norm as ``the $k$-th moment of the pairwise inner products between rows of $M$'', as well as the flat-vs-operator norm connection established in Section~\ref{sec: preliminary}. It remains to be seen whether a similar improvement can be made for the case of 3-APs \cite{KelleyM2023strong}. 

\paragraph*{Organization of the section.} With the setup from Section~\ref{sec: preliminary}, the density increment argument is compactly presented in Section~\ref{sec: spread matrix - sifting}. We also refine the ``spectral positivity'' part of the \cite{KelleyLM2024-nof} argument and present a simplified proof in Section~\ref{sec: spread matrix - spectral positive}. Finally, Theorem~\ref{thm: product of spread matrices} combines the two ingredients and concludes this section.

\subsection{Spreadness of Matrices}\label{sec: spread matrix - sifting}

We define a notion of spreadness for matrices. Roughly, it says that a spread matrix has no large sub-matrix with significantly larger entries on average.

\begin{definition}\label{def:spread matrix}
    A matrix $M\in \mathbb{R}_{\geq 0}^{X\times Y}$ is $(t,\varepsilon)$-spread, if $\|M\|_{\overline{t},\overline{t}} \le \|M\|_1 \cdot (1+\eps)$.
\end{definition}

We compare Definition~\ref{def:spread matrix} with the definition of spreadness from \cite{KelleyLM2024-nof}. The spreadness of \cite{KelleyLM2024-nof} is parameterized by two parameters $k \mathbb{N}$ and $\eps > 0$. A matrix $M$ is spread in the sense of \cite{KelleyLM2024-nof}, if for any submatrix $M'\in \mathbb{R}^{X'\times Y'}$ of size $|X'\times Y'|\ge 2^{-k} | X\times Y|$, one has that $\|M'\|_1 \le \|M\|_1 \cdot (1+\varepsilon)$. Namely, the density of every ``reasonably large'' subractangle can only be larger than the density of $M$ be at most a $(1+\varepsilon)$ factor. Our Definition~\ref{def:spread matrix}, on the other hand, should be (roughly) read as for every submatrix $M'$ of size $(1-\varepsilon)^{k\cdot \tau}$ for some $\tau \ge 1$, one has $\|M'\|_1 \le \|M\|_1\cdot (1+\varepsilon)^{\tau}$. So, our definition is stronger in that we will consider submatrices of size significantly smaller than $2^{-k}$, and we impose proportionally more relaxed density upper bounds for smaller submatrices.

The key technical lemma of \cite{KelleyLM2024-nof} asserts that a spread matrix has bounded $U(2,k)$-norm. We state and prove our version of the lemma in the following.

\begin{lemma}\label{lemma: grid upper bound for spread matrix}
    Let $k,d\ge 1, \eps\in (0,1/5)$ be parameters such that $k\ge 20 d/\eps$. Suppose $M\in [0,1]^{X\times Y}$ is a matrix that is $(k,\eps)$-spread and $\|M\|_1 \ge 2^{-d}$. Then, $\| M \|_{U(2,k)} \le (1+O(\eps)) \| M \|_1$.
\end{lemma}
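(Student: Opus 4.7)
The plan is to reformulate the grid norm as an expected $k$-th power of an operator norm, and then bound the resulting factors using the spreadness hypothesis via the flat-to-operator norm comparisons from Section~\ref{sec: preliminary}. To begin, define the kernel $K(x_1,x_2) := \E_y[M(x_1,y)M(x_2,y)]$ and the row vector $r_{x}(y) := M(x,y)$. A direct computation gives $(M r_{x_1})(x_2) = K(x_1,x_2)$ (viewing $M$ as the normalized operator sending functions of $y$ to functions of $x$), so
\[
\|M\|_{U(2,k)}^{2k} \;=\; \E_{x_1,x_2}\bigl[K(x_1,x_2)^k\bigr] \;=\; \E_{x_1}\|M r_{x_1}\|_k^k.
\]

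Next, I would apply H\"older's inequality in the operator form $\|M r_{x_1}\|_k \le \|M\|_{k,k} \, \|r_{x_1}\|_{k^\star}$ to obtain
\[
\|M\|_{U(2,k)}^{2k} \;\le\; \|M\|_{k,k}^{k} \cdot \E_{x_1}\|r_{x_1}\|_{k^\star}^{k},
\]
and then bound the two factors separately. For the operator norm factor, the spreadness hypothesis says exactly $\|M\|_{\overline k,\overline k} \le (1+\eps)\|M\|_1$; combined with the smoothness $\|M\|_\infty/\|M\|_1 \le 2^d$ and the hypothesis $k \ge 20d/\eps$, \cref{lemma:flat-to-operator-norm} upgrades this to $\|M\|_{k,k} \le (1+O(\eps))\|M\|_1$. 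For the moment factor: since $M(x,y) \in [0,1]$ and $k^\star \ge 1$, we have $M^{k^\star}\le M$ pointwise, so $\|r_{x_1}\|_{k^\star}^{k^\star} \le a(x_1)$ where $a(x) := \E_y M(x,y)$, and hence $\|r_{x_1}\|_{k^\star}^{k} \le a(x_1)^{k-1}$ (using $k/k^\star = k-1$). Spreadness applied to rectangles $S\times Y$ yields $\|a\|_{\overline k} \le (1+\eps)\|M\|_1$, and \cref{prop:vector norm vs flat norm} then gives $\|a\|_k \le (1+O(\eps))\|M\|_1$; since $\|a\|_{k-1} \le \|a\|_k$, we conclude $\E_{x_1}\|r_{x_1}\|_{k^\star}^k \le (1+O(\eps))^{k-1}\|M\|_1^{k-1}$. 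Multiplying, $\|M\|_{U(2,k)}^{2k} \le (1+O(\eps))^{2k-1}\|M\|_1^{2k-1}$.

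The main (and essentially only) obstacle is that taking a $2k$-th root leaves an extra factor $\|M\|_1^{-1/(2k)}$ on the right, yielding only the weaker conclusion $\|M\|_{U(2,k)} \le (1+O(\eps))\|M\|_1^{1 - 1/(2k)}$. This is precisely where the quantitative hypothesis $k \ge 20d/\eps$ is essential: the density lower bound $\|M\|_1 \ge 2^{-d}$ implies $\|M\|_1^{-1/(2k)} \le 2^{d/(2k)} \le 2^{\eps/40} \le 1+O(\eps)$, which absorbs harmlessly into the overall slack and gives the claimed bound $\|M\|_{U(2,k)} \le (1+O(\eps))\|M\|_1$. The remaining checks are the smoothness hypotheses needed to invoke \cref{lemma:flat-to-operator-norm} and \cref{prop:vector norm vs flat norm}: both $M$ and $a$ satisfy $\|\cdot\|_\infty/\|\cdot\|_1 \le 2^d$, and $k \ge 20d/\eps$ trivially implies the required $k \ge 20\log d/\eps$ in both statements.
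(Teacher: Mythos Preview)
Your proof is correct and follows essentially the same route as the paper: rewrite $\|M\|_{U(2,k)}^{2k}=\E_{x}\|M M_{x}\|_{k}^{k}$, bound by $\|M\|_{k,k}^{k}\cdot\E_{x}\|M_{x}\|_{\text{dual}}^{k}$, and control both factors via spreadness through the flat-to-operator norm comparisons, absorbing the residual $\|M\|_{1}^{-O(1/k)}$ loss using $k\ge 20d/\eps$. The only difference is cosmetic: the paper introduces an auxiliary exponent $p\approx 10d/\eps$ and interpolates $\|M_{x}\|_{p^{\star}}\le \|M_{x}\|_{1}^{1/p^{\star}}\|M_{x}\|_{\infty}^{1/p}$, whereas you take $p=k$ and use the pointwise bound $M^{k^{\star}}\le M$ to get $\|r_{x}\|_{k^{\star}}^{k}\le a(x)^{k-1}$ directly, then invoke \cref{prop:vector norm vs flat norm} on $a$ --- a slightly cleaner bookkeeping that lands at the same estimate.
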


\begin{proof}
The condition of $M$ implies that $\frac{\|M\|_\infty}{\|M\|_1}\le 2^d$. As such, Definition~\ref{def:spread matrix} and Lemma~\ref{lemma:flat-to-operator-norm} imply that $\|M\|_{k,k} \le (1+O(\eps)) \|M\|_{\overline{k},\overline{k}} \le (1+O(\eps)) \|M\|_{1}$. For each $x\in X$, let $M_x$ be the $x$-th row vector of $M$. For a proper choice of $p\approx \frac{10d}{\eps}$, we have
\begin{align}
&~~~~ \| M \|_{U(2,k)}^{2k}  \\
&= \E_{x,x'\sim X}[ \langle M_x,M_{x'} \rangle^{k}] \notag \\
&= \E_{x\sim X} [ \| M \cdot M_x^\top  \|_k^k ] \notag \\
&\le \E_{x\sim X} \left[ \left(  \| M \|_{k, p} \cdot \|M_x\|_{p^\star}]  \right)^{k} \right]  \notag \\
&\le \| M \|_{k,k}^k \cdot \E_{x} \left[ \left( \| M_x \|_1^{1/p^{\star}} \| M_x \|_{\infty}^{\frac{p^\star -1}{p^\star}} \right)^{k}  \right] \notag \\
&\le\| M \|_{k,k}^k \cdot \| M \|_{\frac{k}{p^\star}, 1}^{\frac{k}{p^\star}} \cdot  2^{\frac{dk}{p}}  \notag \\
&\le \| M \|_{k,k}^{2k} \|M \|_{k,k}^{\frac{k}{p^\star} - k} \cdot 2^{\frac{dk}{p}}. \notag
\end{align}
Here, the fourth-line is using the definition of operator norm, and the last line is due to $\|M\|_{k,k}\ge \|M\|_{\frac{k}{p^*},1}$. Raising both sides to the power of $\frac{1}{2k}$ and using the fact $\|M\|_{k,k}\ge \|M\|_{1}\ge 2^{-d}$, we obtain
\begin{align}
    \| M \|_{U(2,k)} &\le \|M\|_{k,k}\cdot \|M\|_{k,k}^{-\frac{1}{p}}  \cdot 2^{\frac{d}{2p}} \notag \\
    &\le \|M\|_{k,k}\cdot 2^{2d/p} \notag\\
    &\le (1+O(\eps)) \|M\|_1.
\end{align}

This completes the proof.
\end{proof}

\subsection{Product of Spread Matrices is Near-Uniform}\label{sec: spread matrix - spectral positive}

For a matrix $f:X\times Y\to \mathbb{R}_{\ge 0}$, define $R_f:X\times Y\to \mathbb{R}_{\ge 0}$ by
$$
(R_f)(x,y) = \E_{y'\sim Y}[f(x,y')].
$$
Namely, $R_f$ is the row-averaging of $f$. We make the following claim.

\begin{lemma}\label{lemma:matrix-shift}
    Let $f:X\times Y\to \mathbb{R}_{\ge 0}$ be a matrix with $\| f - R_f\|_{U(2,k)}\ge 6\sqrt{\eps}\cdot \mathbb{E}[f]$. Suppose for every $x\in X$, it holds $\mathbb{E}_{y}[f(x,y)] \ge (1-\varepsilon) \mathbb{E}[f]$. Then, for $p \ge \frac{2k}{\varepsilon}$, we have $\|f\|_{U(2,p)}\ge (1+\varepsilon)\|f\|_1$.
\end{lemma}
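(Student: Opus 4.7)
Plan. Normalize $\mu := \mathbb{E}[f] = 1$, write $f = R_f + g$ so that $g = f - R_f$ has zero row-average, and work with the Gram kernel $A(x_1,x_2) := \mathbb{E}_y[f(x_1,y)f(x_2,y)]$, which decomposes as $A = \alpha + B$ where $\alpha(x_1,x_2) := R_f(x_1)R_f(x_2) \ge (1-\eps)^2$ is a rank-one PSD kernel and $B(x_1,x_2) := \mathbb{E}_y[g(x_1,y)g(x_2,y)]$ is the Gram matrix of the rows of $g$, hence also PSD. The identities $\|f\|_{U(2,p)}^{2p} = \mathbb{E}[A^p]$ and $\|g\|_{U(2,k)}^{2k} = \mathbb{E}[B^k]$ convert the target into showing $\mathbb{E}[A^p] \ge (1+\eps)^{2p}$ given $\mathbb{E}[B^k] \ge (6\eps)^{2k}$ and the pointwise bound $\alpha \ge (1-\eps)^2$.

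The main lever is spectral positivity via the Schur product theorem: for every $j$, the Schur power $\alpha^{\circ(p-j)} \circ B^{\circ j}$ is PSD, so every term in the entrywise binomial expansion
\begin{equation*}
  \mathbb{E}[A^p] \;=\; \sum_{j=0}^p \binom{p}{j}\,\mathbb{E}\!\left[\alpha^{p-j} B^{j}\right]
\end{equation*}
is non-negative. I would keep the $j=0$ term, controlled by Jensen as $\mathbb{E}[\alpha^p] = (\mathbb{E}_x R_f^p)^2 \ge 1$, together with the $j=k$ term (taking $k$ even, so that $B^{\circ k} \ge 0$ entrywise; otherwise replace $k$ by $k+1$). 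What one hopes to prove about that term is an effective lower bound of the form
\begin{equation*}
\mathbb{E}[\alpha^{p-k} B^k] \;\gtrsim\; \mathbb{E}[\alpha^{p-k}]\cdot\mathbb{E}[B^k]\cdot (1-O(\eps))^{k},
\end{equation*}
which combined with the elementary $\binom{p}{k} \ge (p/k)^k \ge (2/\eps)^k$ available from $p \ge 2k/\eps$, and the hypothesis $\mathbb{E}[B^k] \ge (6\eps)^{2k}$, would yield the desired $(1+\eps)^{2p}$ amplification after collecting exponents.

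The main obstacle will be establishing that effective lower bound: the naive pointwise estimate $\alpha \ge (1-\eps)^2$ inserts a disastrous factor $(1-\eps)^{2(p-k)} \approx e^{-4k}$ at $p = 2k/\eps$, so one must compare $\alpha^{p-k}$ to its \emph{average} value rather than to its pointwise minimum. My plan is to do this via the flat-versus-operator-norm correspondence from Section~\ref{sec: preliminary} (Proposition~\ref{prop:vector norm vs flat norm} and Lemma~\ref{lemma:flat-to-operator-norm}): apply it to $B^{\circ k}$ to isolate a combinatorial rectangle $S_1 \times S_2 \subseteq X\times X$ that captures a constant fraction of $\mathbb{E}[B^k]$, and on which $R_f$ (restricted to $S_1$ and $S_2$) stays within a $(1-O(\eps))$-window of its mean, so $\alpha^{p-k}$ is comparable to $\mathbb{E}[\alpha^{p-k}]$ instead of collapsing to its pointwise lower bound. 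The PSD-ness of $B$ (and hence of $B^{\circ k}$) is indispensable here: it is what ensures that the flat witness for $B^{\circ k}$ really is produced by genuinely positive entries of $B$, rather than by cancellations between large positive and negative entries. Concluding via the first (unconditional) inequality $\|A\|_p \ge \|A\|_{\overline p}$ of Lemma~\ref{lemma:flat-to-operator-norm} then gives $\|f\|_{U(2,p)} \ge (1+\eps)\mu$, with the shrinkage factor $(|S_1\times S_2|/|X|^2)^{1/p} \ge 1-O(\eps)$ because $p \ge 2k/\eps$.
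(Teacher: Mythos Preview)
Your setup is sound: the decomposition $A = \alpha + B$, the Schur-product positivity of each binomial term, and your identification of the obstacle (the factor $(1-\eps)^{2(p-k)}$ from the pointwise bound $\alpha \ge (1-\eps)^2$) are all correct and match the paper's starting point. But the proposed fix via flat norms does not go through, for two reasons.

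First, the flat-norm lemmas from Section~\ref{sec: preliminary} require a smoothness hypothesis ($\|v\|_\infty / \|v\|_1$ bounded) that you have no way to verify for $B^{\circ k}$; nothing in the hypotheses of Lemma~\ref{lemma:matrix-shift} controls $\|B\|_\infty$.

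Second --- and more fundamentally --- even if you could isolate a set $S \subseteq X \times X$ on which $B^{\circ k}$ is concentrated, there is no mechanism forcing $R_f$ to be near its mean on the projection of $S$. Consider a toy instance with one ``bad'' row $x_0$ where $R_f(x_0) = 1-\eps$ and $g(x_0,\cdot)$ carries all the variance, while on every other row $R_f \approx 1$ and $g \approx 0$. Then $B^{\circ k}$ is supported entirely at $(x_0,x_0)$, exactly where $\alpha = (1-\eps)^2$ is minimal, and your claimed bound $\E[\alpha^{p-k} B^k] \gtrsim (1-O(\eps))^k \,\E[B^k]$ fails by the full factor $(1-\eps)^{2(p-k)}$. (Also: for even $k$, entrywise nonnegativity of $B^{\circ k}$ is a parity fact, not a consequence of PSD-ness; it does not tell you those entries come from $B > 0$ rather than $B < 0$, so a set where $B^{\circ k}$ is large need not be a set where $A = \alpha + B$ is large.)

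The paper sidesteps this entirely. Rather than wrestle with the variable kernel $\alpha$, it replaces $\alpha$ pointwise by the constant $(1-\eps)^2$ and invokes the spectral-positivity lemma of Kelley--Meka (Lemma~\ref{lemma:spectral-positive}, quoted from \cite{KelleyM2023strong}) as a black box: from $\E[B^t] \ge 0$ for all $t$ together with $\E[B^k]$ large, one obtains directly $\E[((1-\eps') + B)^p] \ge ((1+\eps)\E[f])^{2p}$. The passage back to $\E[A^p]$ is then a stochastic-domination argument (with a small wrinkle to handle negative values, again using that all moments of $B$ are nonnegative). The content of Lemma~\ref{lemma:spectral-positive} is precisely that keeping only the $j=0$ and $j=k$ binomial terms is never enough --- its proof aggregates contributions from many $j$, and this is what your plan is missing.
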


We mostly use Lemma~\ref{lemma:matrix-shift} in its contrapositive form: Suppose we can prove a matrix $f$ is such that $\|f\|_{U(2,p)}\le (1+\eps) \|f\|_1$. Then, provided with the additional assumption $\E_y[f(x,y)] \ge (1-\eps)\E[f]$, we may conclude $\|f - R_f\|_{U(2,k)} \le O(\sqrt{\varepsilon} \|f\|_1)$.

To prove Lemma~\ref{lemma:matrix-shift}, we will need the following lemma from \cite{KelleyM2023strong}.

\begin{lemma}[See Proposition D.1 of \cite{KelleyM2023strong}]\label{lemma:spectral-positive}
    Let $\eps \in (0,1/4)$. Suppose $A$ is a random variable with $\mathbb{E}[A^k] \ge (2\varepsilon)^k$ for some even $k$ and $\mathbb{E}[A^t] \ge 0$ for every $t\ge 1$. Then, for all $p \ge \frac{k}{\eps}$, it holds that $\E[|1+A|^p] \ge (1+\eps)^{p}$. 
\end{lemma}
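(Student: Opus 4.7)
My approach is a binomial expansion of $(1+A)^p$ that, thanks to the hypothesis that every moment of $A$ is non-negative, can be lower-bounded by a single term $\binom{p}{k}(2\eps)^k$; this single term then beats $(1+\eps)^p$ via careful Stirling estimates.

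\textbf{Reduction to integer $p$.} The map $p \mapsto \E[|1+A|^p]^{1/p}$ is non-decreasing on $[1,\infty)$ by standard monotonicity of $L^p$-norms (applied to the non-negative random variable $|1+A|$). It thus suffices to prove the bound for every integer $p \ge k/\eps$; the case of real $p$ then follows by choosing an integer $p_1 \in [k/\eps, p]$ and using monotonicity (a brief continuity argument handles the narrow residual range $p \in [k/\eps, \lceil k/\eps \rceil)$, which has length less than $1$).

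\textbf{Binomial expansion and dropping all but one term.} For integer $p$,
\[
\E[|1+A|^p] \;\ge\; \E[(1+A)^p] \;=\; \sum_{t=0}^p \binom{p}{t}\E[A^t] \;\ge\; \binom{p}{k}\E[A^k] \;\ge\; \binom{p}{k}(2\eps)^k.
\]
The first inequality is pointwise: $|1+A|^p \ge (1+A)^p$ (equality when $p$ is even; when $p$ is odd, the difference is $2|1+A|^p \cdot \1[A < -1] \ge 0$). The second inequality uses the hypothesis $\E[A^t] \ge 0$ for every $t \ge 1$ (and $\E[A^0]=1$) to drop every term except $t=k$, and the last step substitutes the assumed lower bound on $\E[A^k]$.

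\textbf{Stirling and finishing.} Applying $k! \le \sqrt{2\pi k}\,(k/e)^k e^{1/(12k)}$ together with $p \ge 4k$ (a consequence of $\eps \le 1/4$) yields $\binom{p}{k} \ge (ep/k)^k e^{-O(k\eps)}/\sqrt{2\pi k}$. Substituting $p \ge k/\eps$ gives
\[
\binom{p}{k}(2\eps)^k \;\ge\; \frac{(2e)^k}{O(\sqrt k)} \cdot e^{-O(k\eps)}.
\]
Comparing with $(1+\eps)^p \le e^{\eps p} \le e^k \cdot e^{O(\eps)}$ at $p = k/\eps$, the desired inequality reduces to $2^k \ge \Omega(\sqrt k)$, which holds for every even $k \ge 2$. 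The tight borderline case $k=2$ can also be verified directly: at $p = \lceil 2/\eps\rceil$ one has $\binom{p}{2}(2\eps)^2 = 2\eps^2 p(p-1) \ge 8 - O(\eps)$, which beats $(1+\eps)^p \le e^2 \approx 7.39$.

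\textbf{Main obstacle.} The calculation is deceptively tight: my lower bound exceeds the target $(1+\eps)^p$ by a factor of only roughly $2^k/\sqrt k$, which is barely above $1$ for $k = 2$. Tracking the Stirling constants precisely (and, if needed, handling $k=2$ by direct computation) is what makes the single-term approach work uniformly in $k \ge 2$ and $\eps \in (0, 1/4)$. The crucial role of the hypothesis that every moment of $A$ is non-negative is exactly to enable the single-term reduction: without it, one would need to estimate contributions from all non-$k$ moments, which could in principle cancel the $k$-th moment's contribution.
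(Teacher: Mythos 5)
Your strategy (binomial expansion, keeping only the $t=k$ term via non-negativity of all moments, then Stirling) is the standard route to this statement -- the paper itself imports the lemma from \cite{KelleyM2023strong} without proof, and the argument there is of the same type. However, there are two concrete problems in your write-up. First, the reduction step is internally inconsistent: you claim ``it suffices to prove the bound for every integer $p \ge k/\eps$,'' but the single-term bound $\binom{p}{k}(2\eps)^k \ge (1+\eps)^p$ is simply \emph{false} for integers $p$ well above $k/\eps$, since the left side grows like $p^k$ while the right side grows exponentially in $p$. Concretely, for $k=2$, $\eps=1/4$, $p=16$ you get $\binom{16}{2}\cdot\tfrac14 = 30 < (1.25)^{16}\approx 35.5$. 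The reduction must instead be: establish the bound at the \emph{single} integer $p_0=\lfloor k/\eps\rfloor$, and let Lyapunov monotonicity of $p\mapsto \E[|1+A|^p]^{1/p}$ carry it to all real $p\ge p_0\,$; note that taking $\lceil k/\eps\rceil$ does not work, because monotonicity cannot reach down into the residual range $[k/\eps,\lceil k/\eps\rceil)$ and no ``continuity argument'' fills that in.

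Second, your closing numerics do not actually close. When you reduce the comparison to ``$2^k \ge \Omega(\sqrt{k})$'' you have silently discarded the factor $e^{-O(k\eps)}$ coming from the falling factorial $p(p-1)\cdots(p-k+1)\ge p^k e^{-O(k\eps)}$; since $\eps$ may be as large as $1/4$, this factor is $e^{-\Theta(k)}$ and, combined with the $\sqrt{2\pi k}$ from Stirling, your stated lower bound falls \emph{below} $(1+\eps)^p$ at $k=2$. Your direct check of $k=2$ has the same defect: at $\eps=1/4$, $p=8$ your lower bound $2\eps^2p(p-1)=7$ does \emph{not} beat $e^2\approx 7.39$. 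The inequality is nevertheless true there because $(1.25)^8\approx 5.96$, i.e.\ the slack that rescues the argument lives in the gap between $(1+\eps)^p$ and $e^{\eps p}$: one must use $(1+\eps)^p = e^{p\ln(1+\eps)} \le e^{\eps p - \eps^2 p/2 + \eps^3 p/3}$, whose correction $e^{-\Theta(k\eps)}$ is exactly what cancels the falling-factorial loss. With that term retained (and the evaluation done at $p_0=\lfloor k/\eps\rfloor$), the single-term approach does go through for all even $k\ge 2$ and $\eps\in(0,1/4)$, but as written both the reduction and the final estimate contain steps that fail.
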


\begin{proof}[Proof of Lemma~\ref{lemma:matrix-shift}]
    Consider a random variable $A$ which takes value $\langle (f-R_f)(i,\star), (f-R_f)(j,\star)\rangle$ with probability $\frac{1}{|X|^2}$ for every $(i,j)$. We know that
    $$
    \E[A^k] = \|f-R_f\|_{U(2,k)}^{2k} \ge (6\sqrt{\eps} \E[f])^{2k},
    $$
    and that for every $t\ge 1$,
    $$
    \begin{aligned}
    \E[A^t] 
    &= \E_{y_1,\dots, y_t} \E_{x_1,x_2} \left[\prod_{x_i,y_j}(f-R_f)(x_i,y_j)]\right] \\
    &= \E_{y_1,\dots, y_t}  \left( \E_x \left[\prod_{y_j}(f-R_f)(x,y_j) \right] \right)^2 \\
    &\ge 0.
    \end{aligned}
    $$
    Let $\eps' = 2\eps - \eps^2$. Using Lemma~\ref{lemma:spectral-positive} on $\frac{A}{\E[f]^2(1-\eps')}$, we obtain that $\E[(A + (1-\eps')\E[f]^2)^p] \ge ((1+\eps)\E[f]^2)^p$ for every $p\ge k/\eps$. Observe that each row of $(f-R_f)$ sums to zero. Let $J = 1^{X\times Y}$ be the all-one matrix. We calculate
    \begin{align}
    &~~~~ \|f-R_f + (1-\eps)\E[f] J\|_{U(2,p)}^{2p} \notag \\
    &= \E_{i,j} \Big{\langle} (f-R_f + (1-\eps)\E[f] J)(i,\star),  \\
    &~~~~~~~~~  (f-R_f + (1-\eps)\E[f] J)(j,\star) \Big{\rangle}^p \notag \\
    &= \E_A[((1-\eps')\E[f]^2+A)^p] \notag \\
    & \ge ((1+\eps)\E[f]^2)^{p}. \notag
    \end{align}
    Recall that we have the entry-wise bound of $R_f(x,y) \ge 1-\eps$. Hence, for every $i,j$, we have
    \begin{align*}
    &~~~~~\langle f(i,\star), f(j,\star) \rangle \\
    &\ge \Big\langle (f-R_f + (1-\eps)\E[f] J)(i,\star), \\
    &~~~~~~~~~~~~~(f-R_f + (1-\eps)\E[f] J)(j,\star) \Big\rangle.
    \end{align*}
    This is to say, the random variable $A+(1-\eps')\E[f]$ is \emph{stochastically dominated} by $\langle f(\mathbf{i},\star),f(\mathbf{j}, \star)\rangle$.

    One may now be tempted to conclude that the $p$-th moment of $A+(1-\eps')\E[f]$ gives a lower bound for the $p$-th moment of $\langle f(\mathbf{i},\star),f(\mathbf{j}, \star)\rangle$. However, this is generally not true as the random variables can be negative (for example, the random variable $\mathbf{0}$ stochastically dominates $\mathbf{-1}$, but the latter has a larger $p$-th moment). Nonetheless, since $A$ satisfies that $\E[A^t]\ge 0$ for every $t\ge 1$, we know $\E[(A+(1-\eps') \E[f]^2)^p]\ge 0$ for every odd $p$. Therefore, as long as the $p$-th moment of $A+(1-\eps')\E[f]^2$ is concerned, the contribution from the ``positive'' instantiation of the r.v.~is no less than the contribution from the negative part. By only considering the non-negative instantiation of $A+(1-\eps')\E[f]^2$ and using stochastic domination, we may consequently conclude
    \begin{align*}
    \|f\|_{U(2,p)}^{2p} 
    &\ge \E_A[(A+(1-\eps')\E[f]^2)^p \cdot \mathbf{1}\{A\ge 1\}]\\
    &\ge \frac{1}{2} \E_{A}[(A+(1-\eps')\E[f]^2)^p] \\
    &\ge \frac{1}{2}((1+\eps)\E[f]^2)^{p} \\
    &\ge (1+ \eps - O(1/p))^p \E[f]^{2p}.
    \end{align*}
    as desired. For even $p$, we can work with $p-1$ and appeal to the monotonicity of norms.
\end{proof}

\paragraph*{Decoupling inequality.} The rest of the argument is largely similar to \cite{KelleyLM2024-nof}. We nonetheless give a self-contained exposition here. First, let $f:X\times Y\to \mathbb{R}_{\ge 0}, g:Z\times Y\to \mathbb{R}_{\ge 0}$ be two non-negative matrices that are both spread and smooth. We consider their product $f\circ g:X\times Z\to \mathbb{R}_{\ge 0}$ defined as
\begin{align}
    (f\circ g)(x,z) = \E_{y\sim Y}[f(x,y) \cdot g(z, y)]. \label{equ:def-matrix-product}
\end{align}
Assuming the spreadness and smoothness of $f,g$, we claim that $f\circ g$ is close to the matrix $R_f\circ R_g$ (which we note is a rank-$1$ matrix).

To prove the claim, we use a moment method. For any even $k\ge 1$ of our choice, we can write
\begin{align}
    &~~~~ \| f\circ g - R_f\circ R_g \|_k^k \notag \\
    &= \| (f-R_f) \circ (g-R_g) \|_k^k \notag \\
    &= \E_{x\sim X,z\sim Z} \left\langle (f-R_f)(x ,*), (g - R_g)(z, *)\right\rangle^k \notag \\
    &= \E_{x,z} \E_{y_1,\dots, y_k} \left( \prod_{1\le j\le k} (f-R_f)(x,y_j) (g-R_g)(z,y_j) \right) \notag \\
    &= \E_{y_1,\dots, y_k\sim Y} \left( \E_{x\sim X} \prod_{1\le j\le k} (f-R_f)(x,y_j) \right) \times \notag \\
    &~~~~~~~~~~~~~~~~~ \left( \E_{z\sim Z} \prod_{1\le j\le k} (g-R_g)(z,y_j) \right) \notag \\
    &\le \left(\E_{y_1,\dots, y_k\sim Y} \left( \E_{x\sim X} \prod_{1\le j\le k} (f-R_f)(x,y_j) \right)^2 \right)^{1/2}\times \notag \\
    &~~~~~~ \left(\E_{y_1,\dots, y_k\sim Y} \left( \E_{z\sim Z} \prod_{1\le j\le k} (g-R_g)(z,y_j) \right)^2 \right)^{1/2} \notag \\
    &\le \| f - R_f \|_{U(2,k)}^{k} \| g - R_g \|_{U(2,k)}^k. \label{equ:deviation fg to uniform}
\end{align}
Therefore, as long as we can establish that $f, g$ satisfy the prerequisites of Lemma~\ref{lemma:matrix-shift}, we can then upper bound the deviation $\| f\circ g - R_f \circ R_g \|_k$ by $O(\eps^2 \|f\|_1\|g\|_1)$, which would then imply that all but a tiny fraction (roughly $2^{-k}$) of pairs $(x, z)$ satisfy that $\left| \langle f(x,*),g(z,*)\rangle - R_f(x)\cdot R_g(z)\right| \le O(\eps^2 \|f\|_1\|g\|_1)$. 

Overall, we compile everything we have discussed so far and formulate the following theorem.
\begin{theorem}\label{thm: product of spread matrices}
Let $f\in [0,1]^{X\times Y}$ and $g\in [0,1]^{Z\times Y}$ be two matrices. Let $d\ge 10$ be an integer. Assume that
\begin{itemize}
    \item $f, g$ are $(100 d/\eps, \eps)$-spread,
    \item $\E_y[f(x,y)] \ge 1-\eps$, $\E_{y}[g(z,y)] \ge 1-\eps$ for every $x, z$, and
    \item $\|f\|_1,\|g\|_1\ge 2^{-d}$.
\end{itemize}
Then, it follows that
\begin{align*}
    \| f\circ g - R_f\circ R_g \|_{d} \le O(\eps\|f\|_1 \cdot \|g\|_1).
\end{align*}
\end{theorem}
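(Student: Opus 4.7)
The plan is to assemble the three ingredients developed in this section. First, the spreadness hypothesis on $f$ and $g$, via Lemma~\ref{lemma: grid upper bound for spread matrix}, will control their $U(2,p)$ grid norms at a sufficiently large exponent $p$. Second, the row-average lower bound will let me invoke the contrapositive of Lemma~\ref{lemma:matrix-shift} to transfer this control from $f, g$ themselves to the \emph{centered} matrices $f - R_f$ and $g - R_g$ at a smaller exponent $k$. Finally, the Cauchy--Schwarz chain leading to~\eqref{equ:deviation fg to uniform} combines the two centered grid-norm bounds into the desired bound on $f \circ g - R_f \circ R_g$.

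Concretely, I will set $p \coloneqq 100d/\eps$ and fix an even integer $k$ in the range $[d, 50d]$, so that the constraints $p \ge 20d/\eps$ (needed for Lemma~\ref{lemma: grid upper bound for spread matrix}) and $p \ge 2k/\eps$ (needed for Lemma~\ref{lemma:matrix-shift}) are both met. Applying Lemma~\ref{lemma: grid upper bound for spread matrix} to each of $f$ and $g$ -- using the $(p,\eps)$-spreadness and the density lower bound $\|f\|_1, \|g\|_1 \ge 2^{-d}$ to supply the smoothness prerequisite -- yields $\|f\|_{U(2,p)} \le (1+O(\eps))\|f\|_1$ and similarly for $g$. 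The row-average hypothesis $\E_y[f(x,y)] \ge 1-\eps$ implies $\E_y[f(x,y)] \ge (1-\eps)\|f\|_1$ since $\|f\|_1 \le 1$, which is precisely the input Lemma~\ref{lemma:matrix-shift} expects. Taking its contrapositive gives $\|f - R_f\|_{U(2,k)} \le O(\eps)\|f\|_1$ and $\|g - R_g\|_{U(2,k)} \le O(\eps)\|g\|_1$.

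The final step is to plug these bounds into~\eqref{equ:deviation fg to uniform} and extract a $k$-th root to obtain $\|f \circ g - R_f \circ R_g\|_k \le O(\eps^2 \|f\|_1 \|g\|_1)$. Since $k \ge d$ and normalized $\ell_p$-norms on a probability space are monotone in $p$, this implies the stated bound on $\|\cdot\|_d$.

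I expect the main point requiring care to be the constant bookkeeping: Lemma~\ref{lemma:matrix-shift} is stated with specific numerical thresholds ($6\eps$ on the hypothesis side, $(1+\eps)$ on the conclusion side), whereas Lemma~\ref{lemma: grid upper bound for spread matrix} only yields $(1+O(\eps))$. This mismatch is absorbed by rescaling $\eps$ by a small constant factor throughout the argument, at the cost of slightly worsening the implicit constants -- a step for which the regime $\eps \in (0, 1/5)$ and the roomy spread parameter $100d/\eps$ leave ample slack. Beyond this rescaling, the argument is purely a combination of the three lemmas from earlier in this section, with no new ideas required.
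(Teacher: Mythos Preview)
Your proposal is correct and follows essentially the same route as the paper: apply Lemma~\ref{lemma: grid upper bound for spread matrix} to bound the $U(2,p)$ grid norm, then the contrapositive of Lemma~\ref{lemma:matrix-shift} to control $\|f-R_f\|_{U(2,k)}$ and $\|g-R_g\|_{U(2,k)}$, and finally the decoupling inequality~\eqref{equ:deviation fg to uniform}. The only cosmetic difference is that the paper plugs in $k=d$ directly rather than passing through an intermediate $k\in[d,50d]$ and invoking monotonicity of $\ell_p$-norms, and your handling of the constant bookkeeping (the $6\eps$ versus $1+O(\eps)$ mismatch) is stated more explicitly than in the paper.
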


\begin{proof}
    From the spreadness and density lower bound of $f$ we obtain that $\|f\|_{U(2,10d/\eps)}\le (1+O(\eps))\|f\|_1$ by Lemma~\ref{lemma: grid upper bound for spread matrix}. Using Lemma~\ref{lemma:matrix-shift} in its contrapositive form we then get $\|f-R_f\|_{U(2,d)}\le O(\sqrt{\eps} \|f\|_1)$. Repeat the same argument on $g$. We obtain $\|g-R_g\|_{U(2,d)}\le O(\sqrt{\eps} \|g\|_1)$. Finally, by the decoupling inequality as shown above, we conclude that
    \begin{align*}
        \|f\circ g - R_f\circ R_g\|_d \le \|f-R_f\|_{U(2,d)} \cdot \|g - R_g\|_{U(2,d)} \le O(\eps \|f\|_1\|g\|_1),
    \end{align*}
    as claimed.
\end{proof}

\section{Proof of the Removal Lemma}

In this section, we prove Theorem~\ref{thm: removal lemma} by proving the following ``fractional cover'' version.

\begin{theorem} \label{thm: fractional cover}
    Let $f(x,y), g(x,z), h(y,z)$ be some indicator functions on $\Omega = X\times Y\times Z$. Consider their cylinder intersection $F(x,y,z) = f(x,y)\cdot g(x,z)\cdot h(y,z)$. Let $d \ge 5$ and suppose $F$ is such that
    $$
    \mathbb{E}_{(x,y,z)\sim \Omega}[F(x,y,z)] \le 2^{-d}.
    $$
    Then, $F$ is covered by a sum of slice functions:
    $$
    F(x,y,z) \le \sum_{i=1}^{R} c_i\cdot  s_i(x,y,z),
    $$
    where each $s_i$ is of size at least $2^{-O(d)}$ and
    $$
    \E_{(x,y,z)\sim \Omega}\left[\sum_{i=1} c_i s_i(x,y,z) \right] \le 2^{-\Omega(d^{1/2})}.
    $$
\end{theorem}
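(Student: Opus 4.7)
My plan is to prove \cref{thm: fractional cover} by LP duality, reducing the cover problem to a one-sided pseudorandomness statement which I then prove using \cref{thm: product of spread matrices}.

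\textbf{LP duality setup.} I write the fractional cover as an LP: minimize $\sum_i c_i \E[s_i]$ subject to $\sum_i c_i\, s_i(w) \ge F(w)$ pointwise and $c_i \ge 0$, where $s_i$ ranges over slice functions with $\E[s_i] \ge 2^{-Cd}$ for a large enough constant $C > 0$. The dual maximizes $\E[\mu F]$ over $\mu : \Omega \to \R_{\ge 0}$ satisfying $\E[\mu s] \le \E[s]$ for every such slice $s$ (in particular $\E[\mu] \le 1$, by taking $s = \mathbf{1}$). Since both programs are finite, strong duality reduces the theorem to showing that every such dual $\mu$ satisfies $\E[\mu F] \le 2^{-\Omega(d^{1/2})}$.

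\textbf{Slice-evasiveness as tensor spreadness.} The three families of slice functions correspond to the three bipartite splits of $\Omega = X \times Y \times Z$. For slices of form $s(x,y,z) = a(x,y) b(z)$, the constraints $\E[\mu s] \le \E[s]$ over slices of measure $\ge 2^{-Cd}$ say exactly that $\mu$, viewed as a matrix on $(X\times Y)\times Z$, satisfies $\|\mu|_{S\times T}\|_1 \le 1$ on every large rectangle. Choosing $t = \Theta(d/\eps)$ in \cref{def:spread matrix}, the $(|S|/|X\times Y|)^{1/t}(|T|/|Z|)^{1/t}$ penalty absorbs the small-rectangle case into $(1+\eps)$ slack, so $\mu$ becomes $(t, \eps)$-spread in this matricization (via \cref{lemma:flat-to-operator-norm}); the analogous conclusion holds in the other two bipartitions.

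\textbf{Applying \cref{thm: product of spread matrices}.} Rewrite
\[
\E[\mu F] \;=\; \E_{x,z}\bigl[g(x,z)\, P(x,z)\bigr], \qquad P(x,z) := \E_y\bigl[\mu(x,y,z)\, f(x,y)\, h(y,z)\bigr].
\]
The kernel $P$ has the shape of a ``matrix product'' $\hat f \circ \hat h$ in the notation of \cref{sec: spread matrix - spectral positive}, once the $y$-dependence of $\mu$ is split between $f$ and $h$ (e.g.\ by conditioning on a large sub-domain of $X \times Z$ on which a suitable $\mu$-factorization exists). The spreadness of $\mu$ passes to $(t, O(\eps))$-spreadness of $\hat f$ and $\hat h$, while smoothness (near-uniform row sums), as required by \cref{lemma:matrix-shift}, is enforced by restricting to a sub-domain via pigeonholing over dyadic level sets of the row marginals. \cref{thm: product of spread matrices} then yields $\|P - R_{\hat f}\circ R_{\hat h}\|_d \le O(\eps^2 \|\hat f\|_1 \|\hat h\|_1)$, and H\"older's inequality against the $\{0,1\}$-valued $g$ bounds $\E[\mu F]$ by a rank-one proxy comparable to $\E[f]\E[g]\E[h]$. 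Since the cylinder relation controls this proxy by $\E[F] \le 2^{-d}$, balancing the $(1+\eps)^d$-type slack against $\eps \approx d^{-1/2}$ delivers $\E[\mu F] \le 2^{-\Omega(d^{1/2})}$.

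\textbf{Main obstacle.} The delicate step is simultaneously extracting spreadness \emph{and} smoothness for the pair $(\hat f, \hat h)$ from the three-way slice-evasiveness of the tensor $\mu$: spreadness transfers directly via \cref{lemma:flat-to-operator-norm}, but smoothness requires an additional restriction to a $2^{-O(d)}$-fraction sub-domain where the relevant marginals of $\mu$ concentrate, and the factorization of $\mu$ across the $y$-fibers must be chosen so that neither hypothesis is spoiled. I would also handle as a separate base case the regime where one of $\E[f], \E[g], \E[h]$ is already $\le 2^{-d^{1/2}}$: there a single slice such as $s(x,y,z) = h(y,z)$ covers $F$ with total measure at most $2^{-d^{1/2}}$, matching the target bound directly.
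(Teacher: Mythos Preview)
Your LP duality setup is correct and matches the paper. The genuine gap is everything after that: the plan to ``split the $y$-dependence of $\mu$ between $f$ and $h$'' and then invoke \cref{thm: product of spread matrices} does not work, and the final step (``the cylinder relation controls this proxy by $\E[F]\le 2^{-d}$'') is false.

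On the first point: the dual witness $\mu$ is a general nonnegative $3$-tensor. There is no reason it should factor, even approximately or on a large sub-domain, as a product of an $(x,y)$-function and a $(y,z)$-function; this is precisely the structure that distinguishes a generic tensor from a cylinder intersection. So the kernel $P(x,z)=\E_y[\mu(x,y,z)f(x,y)h(y,z)]$ is \emph{not} of the form $\hat f\circ\hat h$ for any matrices $\hat f,\hat h$ to which \cref{thm: product of spread matrices} applies. Spreadness of $\mu$ in its three matricizations does not manufacture such a factorization.

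On the second point: even if you had reduced to a rank-one proxy $\approx\E[f]\,\E[g]\,\E[h]$, there is no inequality of the form $\E[f]\,\E[g]\,\E[h]\lesssim\E[F]$. The hypothesis $\E[F]\le 2^{-d}$ is perfectly compatible with $\E[f],\E[g],\E[h]$ all being $\Omega(1)$; that is exactly the ``few triangles, many edges'' regime. So the chain of inequalities you sketch does not close.

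What the paper does instead is the missing idea. After duality, one may assume $\mu$ is supported on $F$ (replace $\mu$ by $\mu\cdot F$). Normalize $\mu$ to a probability density $p$, pass to an optimally chosen sub-cube $C^\ast$, and define \emph{new} two-variable functions as the marginals of $p_{C^\ast}$: e.g.\ $f'(x,y)=\E_z[p_{C^\ast}(x,y,z)]$. These marginals are automatically supported inside the original faces of $F$, and after $\ell_\infty$-truncation at height $2^{t}$ one has the \emph{pointwise lower bound}
\[
F(x,y,z)\;\ge\;2^{-3t}\,\tilde f(x,y)\,\tilde g(x,z)\,\tilde h(y,z).
\]
The spread machinery is then applied to the marginals $\tilde f,\tilde h$ (whose spreadness comes from the optimality of $C^\ast$, not from any factorization of $\mu$) to \emph{lower-bound} $\E[\tilde f\,\tilde g\,\tilde h]$, hence $\E[F]$, contradicting $\E[F]\le 2^{-d}$. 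The argument runs in the opposite direction to yours: one assumes $\E[\mu F]$ is large, deduces $\E[F]$ is large, and gets a contradiction.
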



\paragraph*{Rounding.} We sketch how to round the fractional cover from Theorem~\ref{thm: fractional cover} to prove Theorem~\ref{thm: removal lemma}. Say we have $F\le \sum_{i=1}^R c_i s_i(x,y,z)$. Consider now a distribution over slice functions where we draw $s_i$ with probability $\frac{c_i}{\sum_{i'}c_{i'}}$. Then, for a given $(x,y,z)\in F$, drawing a random $s_i$ from the distribution can cover $(x,y,z)$ with probability at least $\frac{1}{\sum_{i'}c_{i'}}$. Therefore, drawing $O(\log(|\Omega|)\cdot \sum_{i'}c_{i'})$ slices ensures that one of the slices covers $(x,y,z)$ with probability $1 - \frac{1}{|\Omega|^2}$. So, to obtain an integral covering, we just independently draw $O(\log(|\Omega|)\cdot \sum_{i'}c_{i'})$ functions from the distribution. By a simple union bound, the resulting cover will cover every $(x,y,z)\in F$ with high probability. By Markov's inequality, we can ensure that the total size of the integral cover is larger than the fractional cover by at most a factor of $O(\log(|\Omega|))$. We conclude by noting that $|\sum_{i'} c_{i'}| \le 2^{O(d)}$ (this follows from the upper bound of $\E\left[\sum_{i=1} c_i s_i(x,y,z) \right]$ and the lower bound of $\E[s_i]$). Hence, the number of slices functions in the integral cover is at most $2^{O(d)}\cdot \log(|\Omega|)$.

\paragraph*{Proof of Theorem~\ref{thm: fractional cover}.} We prove the theorem by duality of linear programming. Let $\mathcal{S}$ be the collection of all large slice functions (i.e., slice functions of size $2^{-O(d)}$). We construct a $\Omega\times \mathcal{S}$ matrix $A\in \mathbb{R}^{\Omega \times \mathcal{S}}$. Here, rows and columns are naturally indexed by $(x,y,z)\in \Omega$ and $s\in \mathcal{S}$. The entry indexed by row $(x,y,z)$ and column $s$ is equal to $s(x,y, z)$. We also define the weight of a slice $s\in \mathcal{S}$ to be $w(s) = \E[s]$. Slightly abusing notation, we understand $w$ as both a $\mathcal{S}$-dimensional column vector and a function mapping $\mathcal{S}$ to reals. Similarly, identify $F$ with a $\Omega$-dimensional column vector where the $(x,y,z)$-th entry of $F$ is $F(x,y,z)$. With this setup, let us consider the following linear program:
$$
\begin{aligned}
\min_{c\in (\mathbb{R}_{\ge 0})^{\mathcal{S}}} ~~~& \sum_{s} c_s \cdot w(s) \\
s.t. ~~~~& F \le A\cdot c.
\end{aligned}
$$
Intuitively, this is the covering problem considered in Theorem~\ref{thm: fractional cover}: we want to choose a weight $c_s$ for each slice function so that the weighted combination of slices covers every non-zero entry of $F$. Then, the conclusion of Theorem~\ref{thm: fractional cover} is equivalently saying that the optimal value of this problem is less than $2^{-\Omega(d^{1/2})}$.

We can now write down the dual problem:
$$
\begin{aligned}
\max_{p\in (\mathbb{R}_{\ge 0})^{\Omega}} ~~~~ &\sum_{(x,y,z)} F(x,y,z)\cdot p(x,y,z). \\
s.t. ~~~~ & w \ge A^\top \cdot p.
\end{aligned}
$$

Let us study this linear program more closely. This problem asks for a packing, subject to the constraint that every slice function $s$ upper bounds the weight placed inside the set indicated by $s$, and we would like to maximize the amount of weight placed in the set indicated by $F$. By strong duality, to prove Theorem~\ref{thm: fractional cover}, it suffices to show that every feasible solution $p$ to the dual problem has value at most $2^{-\Omega(d^{1/2})}$.

Suppose for the sake of contradiction that $p$ is a feasible solution with too large a value. We can assume W.L.O.G. that $p(x) = 0$ whenever $F(x) = 0$. Indeed, for any feasible $p$, it is easy to see that $(p\cdot F)$ is also feasible and attains the same value. To employ the pseudorandomness machinery, we would like interpret $p$ as a probability density function. 
Note that $p$ is non-negative by definition. After re-scaling $p$ by $\overline{p}=\frac{p}{\|p\|_1}$, we obtain a probability density function $\overline{p}$ and let $\mathcal{D}$ be the distribution described by $\overline{p}$.
Note that $\mathcal{D}$ is supported only on the support of $F$. Moreover, assuming $\sum_{x,y,z}p(x,y,z) \ge 2^{-c\cdot d^{1/2}}$ for some absolute $c > 0$, we have $\|p\|_1\ge |\Omega|^{-1}\cdot 2^{-c\cdot d^{1/2}}$. Thus, for any slice function $s$ it holds that
$$
\E_{\mathcal{D}}[s] \le\frac{|s|}{|\Omega|} \E_{s}[\overline{p}] \le \frac{1}{|\Omega|} \cdot \sum_{(x,y,z)\in s}p(x,y,z)\cdot |\Omega|\cdot 2^{c\sqrt{d}} \le 2^{c\sqrt{d}}\E_{\mathcal{U}}[s].
$$
In the following, we call any distribution $\mathcal{D}$ satisfying the condition above a $2^{c{d^{1/2}}}$-\emph{evasive} distribution to slice functions.

\begin{definition}\label{def: evasive distributions}
Let $k,d\in \mathbb{N}$. A distribution $\mathcal{D}$ supported on $X\times Y\times Z$ is $(k,d)$-evasive, if for every slice function $s$ of density $\|s\|_1 \ge 2^{-d}$, it holds that $\E_{w\sim \mathcal{D}}[s(w)] \le 2^{k} \E_{w\sim \mathcal{U}}[s(w)]$. A set $S\subseteq X\times Y\times Z$ is $(k,d)$-evasive, if the uniform distribution over $S$ is $(k,d)$-evasive.
\end{definition}

It has been clear from the above discussion that, to rule out the existence of a high-value dual solution, it suffices to prove the following proposition.

\begin{prop}\label{prop:evasive-to-largeness}
    There is an absolute constant $c > 0$ for which the following is true. Let $F$ be a cylinder interaction. Suppose $\mathcal{D}$ is a $(cd^{1/2}, d/c)$-evasive distribution supported on $F$.
    Then, it holds that $\E[F] > 2^{-d}\cdot |\Omega|$.
\end{prop}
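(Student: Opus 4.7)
The plan is to prove the proposition by contradiction: assume $\E[F] \leq 2^{-d}$ and use the evasive distribution $\mathcal{D}$ on $F$ to derive $\E[F] > 2^{-d}$. The strategy combines the flat-norm controls extracted from evasiveness with the sifting and product-of-spread-matrices machinery of Section~\ref{sec: spread matrix}.

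First I would unpack the evasiveness hypothesis at the level of the pair-marginals $p_{XY}, p_{XZ}, p_{YZ}$ of $\mathcal{D}$. Testing slice functions of the form $\1_A(x,y)$ (and the two symmetric forms) for $A$ of density at least $2^{-d/c}$ shows each marginal is flat-bounded, $\E_A[p] \leq 2^{cd^{1/2}}$. Since each marginal is a probability density supported on the corresponding one of $\{f=1\}, \{g=1\}, \{h=1\}$, this forces $\|f\|_1, \|g\|_1, \|h\|_1 \geq 2^{-cd^{1/2}}$, and more generally quantifies how concentrated $\mathcal{D}$ can be on any rectangle visible to the evasiveness hypothesis.

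The main step is an iterative sifting argument in the style of Section~\ref{sec: spread matrix - sifting}. I would find a sub-rectangle $\Omega' = X' \times Y' \times Z'$ on which the restricted matrices $g|_{X' \times Z'}$ and $h|_{Y' \times Z'}$ simultaneously satisfy the $(100d/\eps, \eps)$-spread and row-regularity hypotheses of Theorem~\ref{thm: product of spread matrices}. Each density-increment step (eliminating a violation of either hypothesis) corresponds to a rectangle restriction that must be an evasiveness-admissible slice of density at least $2^{-d/c}$; the improved size-vs-density tradeoff from Section~\ref{sec: spread matrix - sifting} is what keeps the cumulative size loss under control. On $\Omega'$, Theorem~\ref{thm: product of spread matrices} gives $\|g \circ h - R_g \circ R_h\|_d \leq O(\eps^2 \|g\|_1 \|h\|_1)$; pairing this against $f$ via Hölder and using the row-regularity hypotheses yields $\E[F|_{\Omega'}] \geq (1 - O(\eps)) \|f\|_1 \|g\|_1 \|h\|_1 \geq 2^{-O(d/c)}$. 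Transferring back gives $\E[F] \geq (|\Omega'|/|\Omega|) \cdot 2^{-O(d/c)} > 2^{-d}$ once $c$ is larger than a universal constant.

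The main obstacle is calibrating the sifting to be compatible with the one-sided evasiveness hypothesis: each density-increment step restricts onto a rectangle that must be an evasiveness-admissible slice (density at least $2^{-d/c}$), which both limits how much density one can gain per iteration and bounds the cumulative size loss. The improved sifting tradeoff of Section~\ref{sec: spread matrix - sifting} is precisely what allows the parameter budget to close at evasiveness level $cd^{1/2}$; with the earlier \cite{KelleyLM2024-nof} sifting, the same calculation would close only at $cd^{1/3}$, recovering their weaker $\Omega(\log^{1/3} N)$ bound rather than the $\Omega(\log^{1/2} N)$ bound targeted here.
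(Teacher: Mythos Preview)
Your plan has a genuine gap: you propose to sift the \emph{indicator functions} $g$ and $h$ (the factors of $F$), but evasiveness is a property of the \emph{distribution} $\mathcal{D}$, and after the first paragraph you only extract from it the initial density bounds $\|f\|_1,\|g\|_1,\|h\|_1 \geq 2^{-cd^{1/2}}$. That is not enough to run the argument. Concretely, your step ``$\E[F|_{\Omega'}] \geq (1-O(\eps))\|f\|_1\|g\|_1\|h\|_1$'' needs $\|f|_{X'\times Y'}\|_1$ bounded below on the restricted cube, yet sifting $g$ restricts $X$ and sifting $h$ restricts $Y$ with no control whatsoever on $f$ --- after the restriction $f|_{X'\times Y'}$ could even vanish. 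Evasiveness cannot rescue this: it only gives \emph{upper} bounds on the $\mathcal{D}$-mass of slices, so it cannot certify that $\mathcal{D}$ (hence $F$, hence $f$) retains mass on $X'\times Y'$. The sentence ``each density-increment step corresponds to a rectangle restriction that must be an evasiveness-admissible slice'' does not help, for the same reason. And more basically, mere density lower bounds on three bipartite indicators are exactly the hypothesis in the triangle-removal setting, where $\E[fgh]$ can be far smaller than any fixed power of the densities.

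The paper's proof avoids all of this by working with the \emph{marginals of $p$} rather than with the indicator factors. A single optimization --- take the cube $C^*$ maximizing $\Phi(C) = \E_{w\sim C}[p(w)]\cdot(|C|/|\Omega|)^{1/t}$ over large cubes --- makes all three face-marginals $f,g,h$ of $p_{C^*}$ simultaneously spread (optimality of $C^*$ plays the role of sifting for all three faces at once, with no iteration and no coupling problem). Evasiveness is used throughout, not just initially: it bounds $\E_{C}[p]$ for every large cube and is what forces $|C^*|\geq 2^{-t^2}|\Omega|$. After truncating the marginals to $\ell_\infty\leq 2^t$, the key step you are missing is the pointwise inequality $F \geq 2^{-O(t)}\,\tilde f\,\tilde g\,\tilde h$, valid because the marginals of $p$ are supported inside the supports of the original indicator factors. \cref{thm: product of spread matrices} applied to $\tilde f,\tilde h$ and paired against $\tilde g$ then gives $\E_{C^*}[\tilde f\tilde g\tilde h]\geq 1-O(\eps)$, and transferring back yields $\E[F]\geq 2^{-O(t^2)}>2^{-d}$.
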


\subsection{Proof of Proposition~\ref{prop:evasive-to-largeness}}

This subsection is devoted to the proof of Proposition~\ref{prop:evasive-to-largeness}.

In this subsection, we slightly abuse notation by using $p$ to denote the density function of $\mathcal{D}$. By Identity~\eqref{equ:change-of-dist} (namely, $\E_{w\sim \mathcal{D}}[s(w)] = \E_{w\sim s}[p(w)] \cdot{\E_{w\sim \mathcal{U}}[s(w)]}$), the condition of the proposition can be equivalently written as $\E_{w\sim s}[p(w)] \le 2^{c d^{1/2}}$ for every ``large'' slice function.

\paragraph*{Density increment step.} Let $t = c\cdot d^{1/2}$ where $c>0$ is a small but absolute constant. Also let $\eps > 0$ be a small constant. We would like to find a subcube $C^*$ such that the distribution $\mathcal{D}$ becomes spread in the subcube. An iterative greedy approach taken by \cite{KelleyLM2024-nof} is the following: initialize with $C^* = \Omega$, whenever these is a ``large'' subcube in which $\mathcal{D}$ has a noticably higher density, pass $C^*$ to the said subcube. Iterate this until no new ``density increment'' step can be made. 

It turns out that any greedy approach based on ``locally'' adjusting the solution can be conveniently carried out via a potential analysis: one defines a suitable potential function $\Phi$ measuring the ``quality'' of the solution. Then, one proves that the potential function strictly improves throughout the greedy procedure. When the procedure halts at a solution $C^*$, the solution achieves a local maximal for the potential. Moreover, once one has the potential function in hand, it is no longer necessary to explicitly track the greedy procedure. Instead, one can just \emph{define} $C^*$ to be a local optimum of the potential.

We will follow this plan. In particular, let $C^*$ be a subcube maximizing the following optimization problem:
\begin{align}
C^* = \arg\max_{C:|C|\ge 2^{-\frac{1}{c}t^2}|\Omega|} \left\{ \Phi(C) \coloneqq \E_{w\sim C}[p(w)] \cdot \left( \frac{|C|}{|\Omega|}\right)^{1/t} \right\}. \label{equ:cube-density-optimization}
\end{align}
Intuitively, we restrict to a smaller cube $C$ and try to balance between two terms: the ``penalty'' term of $\left( \frac{|C|}{|\Omega|}\right)^{1/t}$ and the ``reward'' term of $\E_{w\sim C}[p(w)]$.
We observe some properties regarding $C^*$. First, since $C=\Omega$ is a feasible solution to the problem, we have
\begin{align}
\Phi(C^*) \ge \Phi(\Omega) = \mathbb{E}_{\mathcal{U}}[p] = 1. \label{equ: C star phi lb}
\end{align}
Since every subcube $C$ with $|C|\ge 2^{-\frac{1}{c}t^2}|\Omega|$ is also a large slice function, we have $\E_{w\sim C}[p(w)]\le 2^t$. That is, the ``gain'' from $\E_{w\sim C^*}[p(w)]$ is upper bounded by $2^{t}$. Hence, it follows that $\left( \frac{|C^*|}{|\Omega|} \right)^{1/t}\ge 2^{-t}$ or equivalently $|C^*|\ge 2^{-t^2}|\Omega|$.

Now, write $C^* = X^* \times Y^* \times Z^*$. Let $p_{C^*}$ be the conditional distribution of $p$ onto $C^*$. By \eqref{equ: C star phi lb}, we have that $p_{C^*}(w) = \frac{p(w)}{\E_{w\sim C^*}[p(w)]} < p(w)$ for every $w\in C^*$. We define the projection of $p_{C^*}$ onto the three two-marginals (a.k.a.~faces):
$$
\begin{aligned}
f(x,y) &\coloneqq \E_{z}[p_{C^*}(x,y,z)], \\
g(x,z) &\coloneqq \E_{y}[p_{C^*}(x,y,z)], \\
h(y,z) &\coloneqq \E_{x}[p_{C^*}(x,y,z)].
\end{aligned}
$$
Note that by construction, we get $\|f\|_1 = \E_{(x,y)\sim X^*\times Y^*}[f(x,y)] = \E_{(x,y,z)\in C^*}[p_{C^*}(x,y,z)] = 1$ and similarly we have $\|g\|_1 = \|h\|_1 = 1$.

\paragraph*{The faces are spread.} We now show that each face is spread with respect to large rectangles. We take $f$ as a demonstrating example here and note that similar arguments apply equally well to $g$ and $h$. Our main claim is that, for any rectangle $X'\times Y'$ of density $\delta = \frac{|X'\times Y'|}{|X^*\times Y^*|}$, as long as $\delta > 2^{-t^2}$, it holds that
\begin{align}
\E_{(x,y)\sim X'\times Y'}[f(x,y)] \le \delta^{-1/t} \E_{(x,y)\sim X^*\times Y^*}[ f(x,y)] . \label{equ: f spread equ 1}
\end{align}
To briefly justify this, assume otherwise and observe that $(X'\times Y'\times Z^*)$ has density at least $2^{-2t^2}$, and it attains a higher value for the optimization problem of \eqref{equ:cube-density-optimization}. This contradicts the optimality of $C^*$. Note that \eqref{equ: f spread equ 1} is almost the condition of Definition~\ref{def:spread matrix}, with the caveat that we only consider $X'\times Y'$ of size at least $\delta$. Before we argue for rectangles $X'\times Y'$ with size smaller than $2^{-t^2}$, we first truncate $f$ and enforce an $\ell_{\infty}$ condition for it.

\paragraph*{Truncating the faces: Step 1.}
We consider the $\ell_{\infty}$-truncated version of these functions:
$$
\begin{aligned}
\overline{f}(x,y) = \min(f(x,y), 2^{t}), \\
\overline{g}(x,z) = \min(g(x,z), 2^{t}), \\
\overline{h}(y,z) = \min(h(y,z), 2^{t}).
\end{aligned}
$$
We claim that $\| \overline{f} - f \|_{L^1(X^*\times Y^*)} \le 2^{-\Omega(t^2/c)}$, and similar conditions hold for $g$ and $h$.

Indeed, consider
$$
S \coloneqq \{ (x,y): f(x,y) > 2^{t}\} \times Z^*.
$$
Note that $S$ is a slice. If $|S| > 2^{-t^2/c}|\Omega|$, it would follow that
$$
\begin{aligned}
\E_{w\sim p}[S(w)] 
&= \Pr_{w\sim p}[w\in C^*] \E_{w\sim p_{C^*}}[S(w)] \\
&= \frac{|C^*|}{|\Omega|} \E_{w\sim C^*}[p(w)] \E_{w\sim p_{C^*}}[S(w)] & \text{(By \eqref{equ:change-of-dist})} \\
&= \frac{|C^*|}{|\Omega|} \E_{w\sim C^*}[p(w)] \cdot \frac{|S|}{|C^*|} \E_{w\sim S}[p_{C*}(w)]  & \text{(By \eqref{equ:change-of-dist})} \\
&= \frac{|C^*|}{|\Omega|} \cdot \frac{|S|}{|C^*|} \E_{w\sim S}[p(w)]  \\
&> \frac{|S|}{|\Omega|} \cdot 2^t \\
&= 2^t\cdot \Pr_{w\sim \Omega}[S(w)].
\end{aligned}
$$
This contradicts our assumption on $p$. Therefore we must have $|S| < 2^{-t^2/c}|\Omega|$. 

Now we know that $S$ must be too small so that it is not subject to the evasiveness condition. Still, we can include some extra pairs $(x,y)$ into $S$ to get a larger slice function $\overline{S'} = S_{xy}\times Z^*$ such that $|S'|\approx 2^{-t^2/c}|\Omega|$. Noting that $f(x,y) = \overline{f}(x,y)$ for every $(x,y)\notin S_{xy}$, we proceed as
$$
\begin{aligned}
&~~~~ \| \overline{f} - f \|_{L^1(X^*\times Y^*)} \\
&= \E_{(x,y)\sim X^*\times Y^*}[|f(x,y) - \overline{f}(x,y)|] \\
&= \frac{|S_{xy}|}{|X^*\times Y^*|} \E_{(x,y)\sim S_{xy}}[|f(x,y) -\overline{f}(x,y)|] \\
&\le \frac{|S_{xy}|}{|X^*\times Y^*|} \E_{(x,y)\sim S_{xy}}[f(x,y)] \\
&= \frac{|S_{xy}|}{|X^*\times Y^*|} \E_{(x,y,z)\sim S}[p_{C^*}(x,y,z)]  \\
&\le \frac{|S_{xy}|}{|X^*\times Y^*|} \E_{(x,y,z)\sim S}[p(x,y,z)] 
& \text{(Recall $p_{C^*} \le p$ pointwise)} \\
&\le 2^{t-t^2/c}. 
& \text{(By evasiveness)}
\end{aligned}
$$
\paragraph*{Truncating the faces: Step 2.} To apply Theorem~\ref{thm: product of spread matrices}, we also need to enforce the left-min-degree condition on $\overline{f}$ and $\overline{h}$. Toward that goal, we simply remove all $x\in X^*$ such that $\E_{y}[\overline{f}(x,y)] < (1-\eps)\|\overline{f}\|_1$ and similarly all $z\in Z^*$ such that $\E_y[\overline{h}(z,y)] < (1-\eps) \|\overline{h}\|_1$.

We actually only remove very few $x\in X^*$ and $z\in Z^*$ in this process. To see this, we note that the original $f$ satisfies a stronger min-degree assumption: namely we have $\E_y[f(x,y)] \ge (1-\eps/2)\E[f]$. Indeed, suppose this fails at a row $x$. Let us consider $X' = X^*\setminus \{x\}$ and calculate the average of $f$ on $X'\times Y$:
\begin{align*}
    &~~~~ \E_{(x,y)\sim X'\times Y}[f(x,y)]  \\
    &> \frac{|X^*|}{|X^*| - 1} \left( 1 - \frac{(1-\eps/2)}{|X^*|}  \right) \\
    &= \frac{|X^*|-1+\eps/2}{|X^*| - 1} \\
    &= 1 + \frac{\eps/2}{|X^*|-1}.
\end{align*}
Thus, it is clearly seen that
\begin{align*}
    &~~~~ \E_{(x,y)\sim X'\times Y}[f(x,y)] \cdot \left( \frac{|X'|}{|X^*|} \right)^{1/t} \\
    & \ge \left( \left( 1 + \frac{\eps/2}{|X^*|-1} \right)^{t} \cdot \left( 1- \frac{1}{|X^*|} \right) \right)^{1/t} > 1.
\end{align*}
This shows that $f$ is not spread, as witnessed by $X'\times Y$, a contradiction, contradicting to \eqref{equ: f spread equ 1}.

We have shown that $\overline{f}$ is obtained from $f$ be removing at most $2^{t-t^2/c}\E[f]$ ``mass'' (measured in $\ell_1$) from $f$. On the one hand, the decrease of $\E[\overline{f}]$, as compared with $\E[f]$, is negligible. On the other hand, one needs to remove $\frac{\eps}{2}\E[f]$ of the ``mass'' from a row $x$ to compromise the min-degree condition on that row. Thus, a simple application of Markov's inequality implies that at most $2^{-\frac{t^2}{2c}}\cdot |X^*|$ rows are removed in Step 2.

\paragraph*{Scaling back.} After the two steps, say we currently have functions $\overline{f},\overline{g}, \overline{h}$ and the ambient space $\overline{X}\times Y^*\times \overline{Z}$ ($\overline{X},\overline{Z}$ may be ever so slightly smaller than $X^*,Z^*$). We scale up $\overline{f},\overline{g},\overline{h}$ and obtain:
$$
\begin{aligned}
\tilde{f} = \overline{f} / \| \overline{f} \|_1 , ~~~ \tilde{g} = \overline{g} / \|g\|_1 ~~~, \tilde{h} = \overline{h} / \|\overline{h}\|_1.
\end{aligned}
$$
Note that $\|\overline{f}\|_1$ is understood as $\E_{(x,y)\sim \overline{X}\times Y^*}[\overline{f}(x,y)]$ (and similarly for $\|\overline{g}\|_1, \|\overline{h}\|_1$).

We now prove that the functions $\tilde{f},\tilde{g},\tilde{h}$ are almost as spread as $f,g,h$. We prove the claim for $\tilde{f}$ as the claims for $\tilde{g},\tilde{h}$ are similarly established. The only property we need here is that $\tilde{f},\tilde{g}$ and $\tilde{h}$ are extremely close to $f,g,h$ in $\ell_1$ distance. Take $\tilde{f}$ as a demonstrating example: for every $X'\times Y'$ of density $\delta > 2^{-t^2/4c}$, we have
$$
\begin{aligned}
\E_{w\sim X'\times Y'}[\tilde{f}(w)] 
&\le \| f - \tilde{f} \|_1 \cdot \frac{|X^*\times Y^*|}{|X'\times Y'|}  +  \E_{w\sim X'\times Y'}[ f(w) ] \\
&\le 2^{-\frac{t^2}{2c} + \frac{t^2}{4c} + O(t)}  +  \delta^{-1/t} \E_{w\sim X^*\times Y^*}[ f(w) ] \\
&\le \delta^{-1/2t} \E_{w\sim X^*\times Y^*}[\tilde{f}(w)].
\end{aligned}
$$
Also, for $X'\times Y'$ of density $\delta < 2^{-t^2/4c}$, we have $\|\tilde{f}|_{X'\times Y'}\|_1 \cdot \delta^{1/t} \le \|\tilde{f}\|_{\infty} \cdot \delta^{1/t} \le \|\tilde{f}\|_1$. These two cases combined imply that $\|\tilde{f}\|_{\overline{t},\overline{t}}\le (1+\eps)\|f\|_1$.

Recall that $\|\tilde{f}\|_1 = 1$ and $\|\tilde{f}\|_\infty \le 2^t$. We now make use of Lemma~\ref{lemma: grid upper bound for spread matrix} and conclude that
$$
\| \tilde{f} \|_{U(k,2)} \le (1+\varepsilon) \| \tilde{f}\|_1
$$
for some $k = \Theta(t/\eps)$.

\paragraph*{Lower-bounding the cylinder intersection.} We are finally ready to lower-bound the size of the cylinder intersection $F$. Let $\overline{C} = \overline{X}\times Y^* \times \overline{Z}$. Recall we have the bound that $\|\tilde{f}\|\le 2^{t}\cdot (1+o(1))$ (and similarly for $\tilde{g}$ and $\tilde{h}$. We thus get the following pointwise bound:
\begin{align}
F(x,y,z) 
\ge \frac{1}{4\cdot 2^{3t}} \tilde{f}(x,y)\cdot \tilde{g}(x,z)\cdot \tilde{h}(y,z) \label{equ: lower bound F to do}
\end{align}
Write $(\tilde{f}\circ \tilde{h})(x,z) = \E_{y\sim Y^*}[\tilde{f}(x,y)\cdot \tilde{h}(z,y)]$. Applying Theorem~\ref{thm: product of spread matrices}, we conclude that
\begin{align*}
    \|\tilde{f} \circ \tilde{h} - R_{\tilde{f}} \circ R_{\tilde{h}} \|_k \le O(\eps^2 \|\tilde{f}\|_1\|\tilde{h}\|_1).
\end{align*}
Therefore, we have
\begin{align*}
     &~~~~ \E_{(x,y,z)\sim \overline{C}}[\tilde{f}(x,y)\cdot \tilde{g}(x,z)\cdot \tilde{h}(y,z)] \\
    &= \left\langle \tilde{f}\circ \tilde{h} , g \right\rangle \\
    &\ge \left\langle R_{\tilde{f}} \circ R_{\tilde{h}}, g \right\rangle - \| \tilde{f} \circ \tilde{h} - R_{\tilde{f}} \circ R_{\tilde{h}} \|_k \cdot \| g \|_{k^\star} \\
    &\ge (1-O(\eps)) \|\tilde{f}\|_1\|\tilde{g}\|_1\|\tilde{h}\|_1 - O(\eps^2 \|\tilde{f}\|_1 \|\tilde{g}\|_1 \|\tilde{h}\|_1 ) \\
    &\ge  (1-O(\eps)) \|\tilde{f}\|_1\|\tilde{g}\|_1\|\tilde{h}\|_1.
\end{align*}
Plug this back in \eqref{equ: lower bound F to do}, we get
\begin{align*}
    \E_{(x,y,z)\sim \overline{C}}[F(x,y,z)] \ge \Omega\left( 2^{-3t} \right).
\end{align*}
Combining this with the bound $|\overline{C}| \ge \Omega(2^{-t^2} |\Omega|)$, the proof is complete.

\section{Number-on-Forehead Communication Lower Bound}

 We describe a couple of consequences of Theorem~\ref{thm: removal lemma}. The first is a quantitative improvement of the main pseudorandomness result from \cite{KelleyLM2024-nof}.

\begin{theorem}\label{thm: pseudorandom for CI from cubes}
There is an absolute constant $c > 0$ for which the following is true. Let $t\in \mathbb{N}$, and let $D\subseteq [N]^3$ be a set that is $(c\sqrt{t},\frac{t}{c})$-evasive w.r.t.~slice functions. For any cylinder intersection $F$ of density $\E[F(x,y,z)] \le 2^{-t}$, it holds that
\begin{align*}
\E_{(x,y,z)\sim D}[F(x,y,z)] \le 2^{-\Omega(\sqrt{t})}.
\end{align*}
\end{theorem}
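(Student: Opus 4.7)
The plan is to derive \cref{thm: pseudorandom for CI from cubes} as an immediate corollary of the fractional cover result, \cref{thm: fractional cover}. Intuitively, \cref{thm: fractional cover} says that any small cylinder intersection $F$ is pointwise dominated by a weighted sum of large slice functions whose total mass is tiny, while evasiveness of $D$ says that exactly the large slice functions cannot noticeably distinguish $D$ from uniform. Combining these two via linearity of expectation should yield the bound directly.

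Concretely, I would apply \cref{thm: fractional cover} with parameter $d := t$ to obtain slice functions $s_1,\dots,s_R$ and nonnegative weights $c_1,\dots,c_R$ such that $F(x,y,z) \le \sum_i c_i s_i(x,y,z)$ pointwise, with $\E[s_i] \ge 2^{-C_1 t}$ for every $i$ and $\E[\sum_i c_i s_i] \le 2^{-C_2\sqrt{t}}$, for absolute constants $C_1, C_2 > 0$ extracted from the proof of \cref{thm: fractional cover}. I would then fix the constant $c$ of the theorem statement to be $c := \min(1/C_1,\, C_2/2)$. The first clause ensures $2^{-C_1 t} \ge 2^{-t/c}$, so each $s_i$ in the cover qualifies as a ``large'' slice in the sense of \cref{def: evasive distributions}; the second clause ensures the evasiveness blow-up $2^{c\sqrt{t}}$ does not overwhelm the cover's $2^{-C_2\sqrt{t}}$ savings.

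With these choices, $(c\sqrt{t},\, t/c)$-evasiveness of $D$ applied to each individual $s_i$ yields $\E_{w\sim D}[s_i(w)] \le 2^{c\sqrt{t}}\, \E[s_i]$. Summing with weights $c_i$ and using the pointwise domination,
\[
\E_{w\sim D}[F(w)] \;\le\; \sum_i c_i\, \E_{w\sim D}[s_i(w)] \;\le\; 2^{c\sqrt{t}}\sum_i c_i\, \E[s_i] \;\le\; 2^{c\sqrt{t}}\cdot 2^{-C_2\sqrt{t}} \;=\; 2^{-\Omega(\sqrt{t})},
\]
as required.

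There is no real obstacle here: all of the difficult work is already housed inside the proof of \cref{thm: fractional cover} (and thus inside \cref{prop:evasive-to-largeness} together with the spread-matrix machinery of \cref{sec: spread matrix}). The only delicate point is making sure that the two absolute constants appearing in the cover --- the ``$O(d)$'' controlling slice sizes and the ``$\Omega(d^{1/2})$'' controlling total mass --- can be simultaneously dominated by a single choice of $c$, and this is precisely what the two clauses in $c := \min(1/C_1,\, C_2/2)$ arrange. I note that the fractional (rather than integral) version of the cover is important: it avoids the extra $\log|\Omega|$ factor present in \cref{thm: removal lemma}, which would otherwise contaminate the final $2^{-\Omega(\sqrt{t})}$ bound.
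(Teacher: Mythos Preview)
Your proposal is correct and follows essentially the same argument as the paper: apply \cref{thm: fractional cover} with $d=t$, then use evasiveness termwise on the resulting weighted slice cover and sum. Your explicit bookkeeping of the constants via $c=\min(1/C_1,\,C_2/2)$ and your remark about needing the fractional rather than integral cover are exactly the points the paper handles (more tersely) by writing ``assuming $c>0$ is small enough'' and by invoking \cref{thm: fractional cover} rather than \cref{thm: removal lemma}.
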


\begin{proof}
    By Theorem~\ref{thm: fractional cover}, we can write $F \le \sum_{i=1}^{R} c_i \cdot s_i(x,y,z)$ where each $s_i$ is of size at least $2^{-O(t)}$ and $\sum_i c_i \E[s_i] \le 2^{-\Omega(\sqrt{t})}$. Then, by the evasiveness of $D$ and assuming $c > 0$ is small enough, we obtain:
    \begin{align*}
        \E_{w\sim D}[F(w)] 
        &\le \sum_{i=1}^{R} c_i \E_{w\sim D}[s(w)] \\
        &\le \sum_{i=1}^{R} c_i \cdot 2^{c\sqrt{t}} \cdot \E_{w\sim [N]^3}[s(w)] \\
        &\le 2^{-\Omega(\sqrt{t})},
    \end{align*}
    as claimed.
\end{proof}

We cite a construction of evasive sets from \cite{KelleyLM2024-nof}.

\begin{theorem}[See Lemma~2.16 and Lemma 2.17 of \cite{KelleyLM2024-nof}]\label{thm: construction of pseudorandom sets}
Let $q\ge 1$ be a prime power and $\mathbb{F}_q$ be the finite field of order $q$. Let $k\ge 1$ and $N = q^k$. Identify $[N]$ with $\mathbb{F}_q^k$. Then, the set
\begin{align*}
    D = \{ (x,y,z) \in [N]^3: \langle x,y\rangle = \langle x,z\rangle = \langle y, z\rangle \}
\end{align*}
is $(O(1), \Omega(k\log(q)))$-evasive against slice functions. Here $\langle x,y\rangle := \sum_{i=1}^{k}x_i y_i \in \mathbb{F}_q$ denotes the inner product between $x$ and $y$.
\end{theorem}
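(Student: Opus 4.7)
The plan is to estimate $|D|$ and $|D \cap s|$ directly using character sums on $\mathbb{F}_q^k$, leveraging the fact that the defining conditions $\langle x,y\rangle = \langle x,z\rangle = \langle y,z\rangle$ amount to two $\mathbb{F}_q$-linear conditions and that $D$ is invariant under all six permutations of $(x,y,z)$. By this symmetry, it suffices to handle a slice of a single shape, say $s(x,y,z) = g(x,y)h(z)$ with $|g| \coloneqq \sum_{x,y} g(x,y)$ and $|h| \coloneqq \sum_z h(z)$. Fix a nontrivial additive character $\chi$ of $\mathbb{F}_q$, rewrite the defining conditions as $\langle x, y-z\rangle = 0$ and $\langle x-y, z\rangle = 0$, and apply the orthogonality relation $\mathbf{1}[t=0] = \frac{1}{q}\sum_{a\in \mathbb{F}_q}\chi(at)$ to each. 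This yields
$$|D \cap s| = \frac{1}{q^2}\sum_{(a,b) \in \mathbb{F}_q^2}\ \sum_{x,y} g(x,y)\,\chi(a\langle x,y\rangle)\sum_z h(z)\,\chi\bigl(\langle (b-a)x - by,\,z\rangle\bigr),$$
with the $(a,b)=(0,0)$ term producing the main contribution $|g||h|/q^2$.

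\paragraph*{Bounding the error terms.}
For each $(a,b) \ne (0,0)$, the inner sum over $z$ collapses to $N\hat h((b-a)x - by)$, where $\hat h(\xi) = \frac{1}{N}\sum_z h(z)\chi(\langle \xi,z\rangle)$ is normalized so that Plancherel reads $\sum_\xi|\hat h(\xi)|^2 = |h|/N$. When $b \ne 0$, for each fixed $x$ the map $y \mapsto (b-a)x - by$ is a bijection on $\mathbb{F}_q^k$, so $\sum_y |\hat h((b-a)x-by)|^2 = |h|/N$ and applying Cauchy--Schwarz over $(x,y)$ bounds this term by $(N/q^2)\sqrt{|g||h|}$. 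The degenerate case $b=0,\, a\ne 0$ has to be handled separately: there one settles for the trivial bound $|\sum_y g(x,y)\chi(a\langle x,y\rangle)| \le |g_x|$, then applies Cauchy--Schwarz in $x$ against $\sum_x |\hat h(-ax)|^2 = |h|/N$ and $\sum_x |g_x|^2 \le N|g|$; this yields the same $(N/q^2)\sqrt{|g||h|}$ bound. Summing over the $q^2 - 1$ nonzero $(a,b)$ gives
$$|D \cap s| \ \le\ \frac{|g||h|}{q^2} + N\sqrt{|g||h|}.$$

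\paragraph*{Concluding and anticipated obstacle.}
A parallel but easier character-sum calculation, where the degenerate $(a,b)$'s give contributions of order $N^2/q$ and the fully nondegenerate ones produce Gauss sums of size $O(N^{3/2}/q^2)$, yields $|D| = N^3/q^2 + O(N^2/q + N^{3/2})$, hence $|D| \ge N^3/(2q^2)$ once $k$ exceeds a fixed constant. Dividing,
$$\E_{\mathcal{D}}[s] \ =\ \frac{|D \cap s|}{|D|} \ \le\ 2\frac{|g||h|}{N^3} + \frac{2q^2 N}{N^3}\sqrt{|g||h|} \ =\ \E_{\mathcal U}[s]\!\left(2 + O\!\left(\frac{q^2 N}{\sqrt{|g||h|}}\right)\right),$$
and the multiplicative error is $O(1)$ so long as $|g||h| \gtrsim q^4 N^2 = N^{2 + 4/k}$. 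Translating the hypothesis $\E_{\mathcal U}[s] \ge 2^{-d}$ to $|g||h| \ge 2^{-d}N^3$, this holds whenever $d \le c\cdot k\log q$ for a sufficiently small absolute $c > 0$ and $k$ at least a fixed constant, which delivers the claimed $(O(1),\Omega(k\log q))$-evasiveness. The main technical subtlety I expect is the degenerate $b=0$ case in the error analysis, where $\hat h(-ax)$ no longer varies with $y$ so one loses the Plancherel cancellation that worked so cleanly in the main case; recovering the $\sqrt{|g||h|}$ bound there via the two-sided Cauchy--Schwarz described above is exactly what pins down the $\Omega(k\log q)$ evasiveness window.
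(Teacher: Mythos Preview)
Your argument is correct. You expand the two constraints $\langle x,y-z\rangle=0$ and $\langle x-y,z\rangle=0$ via additive characters, isolate the main term $(a,b)=(0,0)$, and control the off-diagonal terms by Cauchy--Schwarz combined with Plancherel for $\hat h$; the separate treatment of the degenerate case $b=0$ (where the frequency $-ax$ loses its $y$-dependence) via the crude bound $|\sum_y g(x,y)\chi(a\langle x,y\rangle)|\le |g_x|$ and then $\sum_x|g_x|^2\le N|g|$ is exactly right and yields the same $N\sqrt{|g||h|}$ contribution. The resulting inequality $|D\cap s|\le |g||h|/q^2+N\sqrt{|g||h|}$ together with $|D|\ge N^3/(2q^2)$ gives the $(O(1),\Omega(k\log q))$ evasiveness once $|g||h|\gtrsim q^4 N^2$, as you say.

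This is a genuinely different packaging from the paper's sketch. The paper first stratifies $S_{xy}$ by the value $c=\langle x,y\rangle$ and then, for each fixed $c$, bounds the number of $z\in S_z$ with $\langle x,z\rangle=\langle y,z\rangle=c$ by appealing to a counting lemma / the two-source extractor property of inner product over $\F_q^k$ (treating $z$ as one source and the concatenation $(x,y)$ as the other). Your approach is a direct Fourier computation that in effect \emph{reproves} the relevant extractor inequality inside the argument. The advantages cut both ways: the paper's route is more modular and immediately generalizes to any setting where one has an extractor-type bound on the bilinear form, while your route is fully self-contained, gives the explicit error $N\sqrt{|g||h|}$ in one shot, and as a bonus produces the asymptotic $|D|=N^3/q^2+O(N^2/q+N^{3/2})$ with no extra work. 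Both lead to the same quantitative evasiveness window, and (unsurprisingly) the Cauchy--Schwarz step at the heart of your character-sum bound is the same Cauchy--Schwarz that underlies the inner-product extractor lemma the paper invokes.
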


We sketch the proof of Theorem~\ref{thm: construction of pseudorandom sets} below and refer the readers to \cite{KelleyLM2024-nof} for the complete proof of a stronger statement. Note that here we only need to establish the evasiveness of $D$ against slice functions. In \cite{KelleyLM2024-nof}, a much stronger pseudorandomness property of $D$ is proved.

\begin{proof}[Proof sketch.]
    Let $S = S_{xy} \times S_z \subseteq (X\times Y) \times Z$ be a relatively large slice function (i.e., of size at least $q^{-ck}\cdot N^3$ where $c > 0$ is a small but absolute constant). The evasiveness condition on $S$ is equivalent to $\frac{|S\cap D|}{|D|} \le O\left( \frac{|S|}{|\Omega|} \right)$. So, to prove Theorem~\ref{thm: construction of pseudorandom sets} we must upper-bound $\frac{|S\cap D|}{|S|}$ by roughly $\frac{|D|}{|\Omega|}\approx q^{-2}$. We may decompose $S_{xy} = \bigsqcup_{c\in \mathbb{F}_q} S_{xy}^{(c)}$ where $S_{xy}^{(c)} := \{(x,y)\in S_{xy} : \langle x, y\rangle = c\}$ and write $|S\cap D| = \sum_{c} |\{(x,y,z)\in S_{x,y}^{(c)}\times S_z: \langle x,z\rangle = \langle y,z\rangle = c\}|$. For each $c\in \mathbb{F}_q$, we upper-bound the term by applying counting lemmas between $z$ and $x,y$ (Alternatively, one can appeal to the property that Inner Product is a good extractor).
\end{proof}

We prove Theorem~\ref{thm: main nof lb} as a corollary below. 

\begin{proof}[Proof of Theorem~\ref{thm: main nof lb}]
The randomized NOF communication upper bound can be obtained by running a (randomized) equality check among three players.

To see the deterministic (or non-determinisitic) NOF lower bound, it is well-known that if $D$ is solved by a non-deterministic NOF protocol with complexity $c\log^{1/2}(N)$ where $c > 0$ is sufficiently small, then there is a cylinder intersection $F\subseteq D$ that covers at least $2^{-c\log^{1/2}(N)}$ points of $D$. Namely, for $t = c^2 \log(N)$, we have
\begin{align*}
    \E_{(x,y,z)\sim D}[F(x,y,z)]  >  2^{-\sqrt{t}}.
\end{align*}
Using Theorem~\ref{thm: pseudorandom for CI from cubes} in its contrapositive form, this implies that $\frac{|F|}{|\Omega|} \ge 2^{-t} \ge N^{-O(c)}$. For $c > 0$ small and $k$ (as in Theorem~\ref{thm: construction of pseudorandom sets}) large enough, we have $\frac{|F|}{|\Omega|}  \ge N^{-O(c)} \gg N^{-2/k} \approx \frac{|D|}{|\Omega|}$. This implies that $F$ cannot be a subset of $D$, a contradiction.
\end{proof}

\section*{Acknowledgment}

We thank Anthony Ostuni and Amit Chakrabarti for valuable comments on an earlier version of the paper. We are also grateful to the anonymous FOCS reviewers for their useful feedback.

\bibliographystyle{alpha}
\bibliography{references}

\end{document}